\documentclass[dvipsnames,table]{article}

\usepackage{reportstyle}
\usepackage{amsmath,amssymb,mathtools,amsthm}
\usepackage{graphicx,placeins,xspace}
\usepackage[utf8]{inputenc}
\usepackage[T1]{fontenc}
\usepackage{url,booktabs,microtype,colortbl,tabularx,siunitx}
\usepackage[labelfont=bf]{caption}
\usepackage{enumitem,sectsty,forest,fancyhdr}
\usetikzlibrary{arrows.meta,calc,positioning}

\let\cite\citep
\renewcommand{\headrulewidth}{1pt}
\makeatletter
\def\headrule{{\if@fancyplain\let\headrulewidth\plainheadrulewidth\fi
  \hrule\@height\headrulewidth\@width\headwidth\vskip-\headrulewidth}}
\makeatother

\definecolor{HYDarkBlue}{HTML}{2155EA}
\definecolor{HYLightBlue}{HTML}{A8DFF6}
\usepackage[
  colorlinks,
  linkcolor=HYDarkBlue,
  anchorcolor=HYDarkBlue,
  citecolor=HYDarkBlue,
  urlcolor=HYDarkBlue
]{hyperref}
\usepackage{cleveref,threeparttable}
\sectionfont{\color{HYDarkBlue}\fontfamily{zi4}\selectfont}
\subsectionfont{\color{HYDarkBlue}\fontfamily{zi4}\selectfont}

\theoremstyle{plain}
\newtheorem{theorem}{Theorem}[section]
\newtheorem{lemma}[theorem]{Lemma}
\newcommand{\topk}{Top-$K$\xspace}

\title{Sample-Guided Exact Top-$K$ Selection for Long-Context Sparse Attention}
\author{%
  \textbf{\small Siran Liu, \quad Yang Xue, \quad Theo Tang, \quad Changxu Shao, \quad
  Qian Cheng, \quad Haimeng Ren, \quad Donghua Jiang}\\[0.12em]
  \textbf{\small Haipeng Ming, \quad Lehua Ding, \quad Zhonghan Lin, \quad Shengying Wei, \quad
  Wei Liu\thanks{Correspondence to: Wei Liu.}, \quad Kai Liu, \quad Jianchen Zhu}\\[0.45em]
  {\normalfont\small Tencent Inc.}
  \vspace{0em}
}

\newcommand{\githubicon}{%
  \raisebox{-1.5pt}{\includegraphics[height=1.05em]{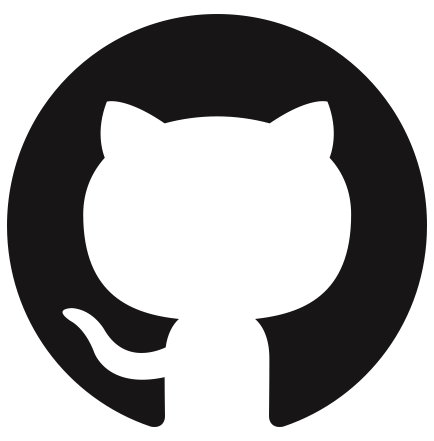}}}
\newcommand{\emailicon}{%
  \raisebox{-1.5pt}{\includegraphics[height=1.05em]{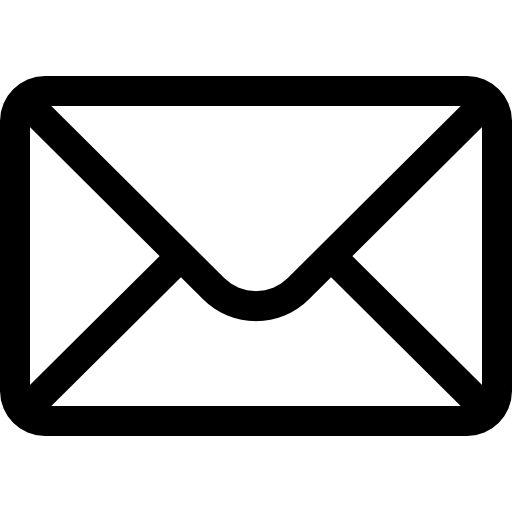}}}
\newcommand{\artifactlinks}{%
  \begin{tabular}{rl}
    \githubicon & \url{https://github.com/Tencent/hpc-ops} \\
    \emailicon & \href{mailto:sethranliu@tencent.com}{\nolinkurl{sethranliu@tencent.com}},
      \href{mailto:reedlauliu@tencent.com}{\nolinkurl{reedlauliu@tencent.com}}
  \end{tabular}%
}

\begin{document}

\maketitle
\thispagestyle{fancy}
\vspace{-7pt}

\begin{abstract}
Sparse attention bounds downstream attention work by retaining a fixed-size subset of indexed
tokens, but its standalone exact Top-$K$ stage must still process materialized score rows whose
length grows with context.  Production radix selectors discover their first actionable boundary
only after a complete-row pass, forcing another row-scale traversal before exact refinement.  We
observe that locating a compact upper tail requires substantially less resolution than identifying
the exact rank boundary, and that fixed-stride partial views of the current row remain calibrated
to the corresponding complete-row rank across ragged lengths.  We present \textbf{HPC-Ops Top-K},
a sample-guided exact selector for ragged sparse-attention score rows.  A fixed-stride view proposes a
row-local coarse boundary; the mandatory complete-row pass certifies its sufficiency, forms the
admitted candidate set, and initializes exact FP32 refinement over the unresolved frontier.  A
nested secondary boundary and exact recovery handle underfilled proposals before any output is
committed, so sampling controls common-path work but never correctness.  The GPU implementation
fuses complete-row certification and candidate formation, and combines persistent, KV-split, and
direct-exact execution behind graph-capturable ragged-row dispatch.  We evaluate HPC-Ops Top-K on
indexer scores from Hy4-Preview.  It outperforms the fastest verified external exact baseline by
$1.29$--$1.75\times$ across 20 operator configurations, with a $1.55\times$ geometric-mean speedup.
It further achieves $1.36\times$ and $1.48\times$ speedups on two framework-derived sparse-attention
traces.  The implementation is available in HPC-Ops, Tencent's open-source high-performance
operator library for LLM inference, at
\url{https://github.com/Tencent/hpc-ops}.
\end{abstract}

\section{Introduction}
\label{sec:intro}

Long-context inference increases both KV-cache traffic and the cost of determining which tokens
each query should attend to.  Sparse attention bounds the subsequent attention work by retaining
a fixed-size subset of the indexed context.  In indexer-based designs such as DeepSeek Sparse
Attention~\cite{deepseekv32}, an indexer materializes an FP32 relevance-score row for each
query, exact Top-$K$ identifies the selected positions, and the main attention accesses only the
corresponding keys and values.  Fixing $K$ bounds the downstream attention work, but it does not
bound selection: every score row must still be examined over its full valid length.  At hundreds
of thousands of indexed tokens and many ragged rows per invocation, exact Top-$K$ can therefore
remain on the critical path even after the surrounding kernels have been optimized.

The central cost is not the unavoidable inspection of an unsorted row, but when an exact
selector obtains a boundary that can reduce the active set.  Production radix selection first
streams the complete row into a digit histogram; only the completed histogram reveals the bucket
containing rank $K$.  Scores encountered before the histogram completes cannot be classified on
their first encounter, so candidate formation requires another row-scale traversal before exact
refinement. Other exact GPU selectors maintain ordered state, partition or reduce the row, or
predict boundaries from prior
invocations~\cite{shanbhag2018topk,air2023,radik2024,rtopk2025,gaihre2021drtopk,gvr2026}.  These
mechanisms trade row traffic for ordered work, resident state, data-dependent intermediate sets,
or reusable history.  Closing this gap requires a current-row boundary before the complete-row
pass, together with exact semantics, regular GPU execution, ragged-row support, CUDA-Graph
capture, and score traffic close to one complete-row scan.

Our key insight is that early reduction does not require the resolution of exact selection. Two
findings support this separation.  First, a large boundary error can leave only a compact upper
tail: distinguishing the value at rank $K$ from its immediate neighbor can demand an extremely
precise cutoff, whereas substantially coarser boundaries still leave compact upper tails across
the captured score rows.  Second, fixed-stride views track global upper-tail quantiles across
ragged lengths without relying on prior invocations.  Finite-population order statistics explain
this position-agnostic baseline, while the endpoint structure observed in sparse-attention
scores further reduces the estimation error.  Together, these findings make the current row
itself a low-cost source of an early boundary while leaving the unavoidable complete-row pass
authoritative for exactness.

We use this separation to build \textbf{HPC-Ops Top-K}, a three-phase sample-guided exact
selector. Phase~1 reads a fixed-stride view and proposes a row-local coarse boundary.  Phase~2
applies that boundary during the mandatory complete-row traversal, certifies the sufficiency of
the admitted candidate set, persists the admitted candidates, and constructs the leading
histogram for exact refinement on their first encounter.  If the proposal underfills, a nested
secondary boundary and, when necessary, exact coarse recovery establish a sufficient candidate
set before any output is committed.  Phase~3 performs exact FP32 radix refinement only over the
compact certified frontier, committing resolved digit groups and carrying only the unresolved
boundary frontier forward. Sampling therefore controls common-path work but never determines the
returned indices.  With sampling stride $s$, the proposal reads approximately $L/s$ scores
before the required complete-row traversal, rather than introducing another row-scale
boundary-localization pass.

The GPU implementation realizes the sampled pipeline within a shape-dependent portfolio of exact
kernels.  The fused complete-row pass uses vectorized score reads while performing boundary
classification, candidate collection, and construction of the leading exact histogram in one
traversal.  Subsequent refinement carries forward only the unresolved frontier rather than
revisiting resolved scores.  Persistent kernels keep each row and its compact state under one
CTA, KV-split kernels expose parallelism across a few long rows, and direct-exact kernels cover
complementary shapes for which sampling does not amortize.  These paths share graph-stable
ragged-row dispatch and the same exact FP32 refinement contract.

We evaluate HPC-Ops Top-K on indexer scores from Hy4-Preview~\cite{hy4preview2026} against
verified exact Top-$K$ implementations from vLLM, TensorRT-LLM, SGLang, FlashInfer, and
PyTorch~\cite{vllm2023,tensorrtllm2026,sglang2024,flashinfer2025,pytorch2019}.  HPC-Ops Top-K is
fastest in all 20 evaluated operator configurations, outperforming the fastest external baseline
by $1.29$--$1.75\times$ with a $1.55\times$ geometric-mean speedup.  On two framework-derived
sparse-attention traces, it achieves $1.36\times$ and $1.48\times$ speedups.  The implementation
is released as part of HPC-Ops, Tencent's open-source high-performance operator library for LLM
inference, at \url{https://github.com/Tencent/hpc-ops}.

Our contributions are:
\begin{itemize}[leftmargin=1.5em,itemsep=2pt,topsep=3pt]
  \item \textbf{Workload insight.}  We identify and analyze the separation between locating a
    compact upper tail and resolving its exact rank boundary, and show that fixed-stride views of
    the current row can locate a useful coarse boundary across ragged lengths.
  \item \textbf{Sample-guided exact selection.}  We develop a three-phase
    proposal--certification--refinement algorithm in which sampling reduces common-path work while
    complete-row certification, bounded recovery, and exact FP32 refinement preserve exact
    Top-$K$ semantics.
  \item \textbf{GPU realization and validation.}  We realize the design through fused
    complete-row processing and a graph-stable dispatcher over persistent, KV-split, and
    direct-exact kernels, bringing the principal sampled path close to the one-scan traffic floor.
    We validate the resulting operator at both operator and framework levels and release it as part
    of the open-source HPC-Ops library.
\end{itemize}

\clearpage
\section{Background}
\label{sec:bg}

\subsection{Cost of Standalone Top-K Selection}
\label{sec:bg:interface}

Sparse-attention models decide which key--value entries each query may attend to by ranking a
score vector.
In indexer-based designs such as DeepSeek Sparse Attention~\cite{deepseekv32},
GLM-5.3~\cite{glm53model2026}, and Hy4-Preview~\cite{hy4preview2026}, a lightweight indexer
scores every historical token against the current query,
\begin{equation}
  \boldsymbol{\ell}_{t} \;=\; \sum_{j=1}^{h} w_j \cdot
  \operatorname{ReLU}\!\left(\mathbf{q}_{t,j}\,\mathbf{K}^{\top}\right)
  \;,
  \label{eq:indexer}
\end{equation}
and the selector keeps the $K$ positions with the largest scores.
The downstream sparse attention consumes those indices, not the scores.

Two properties of this arrangement set up the rest of the report. The score matrix is
\emph{materialized} before selection, so the selector is a standalone operator over an existing
array. And its cost scales with context length while the attention behind it does not.

\paragraph{Operator contract.}
The operator maps a materialized score matrix and a per-row valid length to $K$ indices per row:
\[
  \underbrace{\mathbf{X} \in \mathbb{R}^{M \times N}}_{\texttt{float32}},\quad
  \underbrace{\mathbf{e} \in \mathbb{N}^{M}}_{\text{valid lengths}}
  \;\longmapsto\;
  \underbrace{\mathbf{I} \in \mathbb{Z}^{M \times K}}_{\texttt{int32}} .
\]
Here $M$ is the number of score rows and $N$ is their allocated, possibly
padded width.  Row $r$ has valid length $L_r:=e_r\le N$, valid positions
$[L_r]:=\{0,\ldots,L_r-1\}$, and scores $x_{r,i}:=\mathbf X_{r,i}$ for
$i\in[L_r]$.  For a single-row argument, we omit $r$ and write $L$, $[L]$, and
$x_i$.
When $L_r>K$, the operator returns $K$ distinct indices from
$[L_r]$.
For $L_r\le K$, row $r$ instead contains every valid index, and any remaining slots
$j\in\{L_r,\ldots,K-1\}$ are filled with the sentinel $I_{r,j}=-1$; the downstream
sparse-attention operator skips those entries.
Selection is \emph{exact} when no unselected valid element of a row exceeds any selected one.
Ties at rank $K$ leave the index set underdetermined, so the contract fixes the selected values
rather than a particular index permutation.
Inputs are \texttt{float32} and free of \texttt{NaN}; ${\pm}\infty$ and subnormals are ordered
numerically.

\paragraph{Deployment requirements.}
Four properties of the serving path decide whether a selector can be dropped into it, and they
later exclude designs that are otherwise attractive. Rows are \emph{ragged}: under causal
masking within a prefill chunk, $L_r$ varies widely, so a schedule that assigns equal work per
row is unbalanced. Rows are stored at a padded pitch whose alignment, not the logical length,
determines the vector width a kernel may use. The operator must be capturable in a CUDA graph,
which forbids reading a runtime statistic back to the host to choose a code
path~\cite{nvidia2026cuda}, and its workspace must be bounded independently of the batch. It
must accept scores from any producer, and it must be replayable offline on a saved score matrix
so that numerical regressions in a deep model remain tractable.

\paragraph{Why long rows are expensive.}
Once the indexer has materialized a score row, the selector must cover its $L$ valid entries,
whereas downstream sparse attention consumes only the $K$ selected positions.  We compare the
selector's row-scaled score traffic with the selected-position payload of sparse attention.  Let
$b_s$ be the bytes per materialized score, and let $b_a$ be the logical bytes consumed by sparse
attention per selected position, including its index and model-specific cached attention state.
For the selector, let $W_{\mathrm{score}}$ count logical global-memory score reads over the valid
row: complete traversals and any row fraction revisited
before reduction reaches a compact candidate frontier.  The resulting row-equivalent traffic is
$R_{\mathrm{eq}}:=W_{\mathrm{score}}/L$, while subsequent exact refinement contributes
candidate-scale work.  The corresponding
score-read traffic and selected-position payload are then
\begin{equation}
  \mathcal B_{\mathrm{scan}}=R_{\mathrm{eq}}L b_s,
  \qquad
  \mathcal B_{\mathrm{sparse}}=K b_a.
  \label{eq:traffic}
\end{equation}
Their ratio exposes the relevant scaling law:
\begin{equation}
  \frac{\mathcal B_{\mathrm{scan}}}{\mathcal B_{\mathrm{sparse}}}
  =\frac{R_{\mathrm{eq}}b_s}{b_a}\frac{L}{K}.
  \label{eq:ratio}
\end{equation}
For fixed $K$ and data representations, this ratio grows linearly with $L$; each additional
complete-row traversal increases it in the same proportion.

The standalone exact-selection contract also imposes a traffic floor.  An arbitrary unsorted row
cannot be certified without inspecting every valid score, so
\begin{equation}
  R_{\mathrm{eq}}\geq1
  \qquad\text{for any exact selector.}
  \label{eq:floor}
\end{equation}
Each complete global traversal of the valid row contributes $1$, while revisiting a fraction
of the row contributes the same fraction.  Values reused from registers or shared memory add no
global score read.  Once the row has been reduced to a compact frontier, its exact work is accounted
for separately.  Thus $R_{\mathrm{eq}}=1$ denotes one complete scan without another row-scaled
revisit.
For fixed $K$, $b_s$, and $b_a$, $R_{\mathrm{eq}}$ is the selector-controlled factor in
\cref{eq:ratio}, which makes it the traffic axis used to compare the mechanisms in
\cref{sec:bg:landscape}.

\subsection{Approaches to Exact Selection}
\label{sec:bg:landscape}

The textbook task is to return the $K$ largest entries of a row; GPU selectors differ in how
they obtain the boundary needed for that result.  \Cref{fig:taxonomy} groups exact mechanisms by
whether they maintain ordered state, discover a boundary through current-row partitioning,
reduce the row before an exact backend, or predict a boundary from prior invocations. A separate
body of work reduces the same system cost by moving selection into score production, sharing
indexing work across queries or layers, or relocating sparse-attention data and execution;
\cref{sec:related} covers it.

\begin{figure*}[!htbp]
  \centering
  \footnotesize
  \begin{forest}
    for tree={
      grow'=east,
      growth parent anchor=east,
      parent anchor=east,
      child anchor=west,
      anchor=west,
      draw=HYDarkBlue!45,
      rounded corners=2pt,
      align=left,
      inner sep=3pt,
      s sep=2.5pt,
      l sep=8pt,
      edge={HYDarkBlue!45},
      edge path={
        \noexpand\path[\forestoption{edge}]
          (!u.parent anchor) -- +(5pt,0) |- (.child anchor)\forestoption{edge label};
      },
    }
    [Exact \topk\\on a materialized\\row, fill=HYLightBlue!30, draw=HYDarkBlue!60
      [Order\\maintenance
        [{Full sort}, fill=gray!8
          [{\texttt{torch.sort}$+$prefix}, draw=none]
        ]
        [{Bounded ordered state}, fill=gray!8
          [{bitonic \topk~\cite{shanbhag2018topk};\\GridSelect~\cite{air2023}}, draw=none]
        ]
      ]
      [Boundary-dependent\\partitioning
        [{Radix / bucket select}, fill=gray!8
          [{\texttt{torch.topk}; AIR~\cite{air2023};\\RadiK~\cite{radik2024}}, draw=none]
        ]
        [{On-chip threshold search}, fill=gray!8
          [{RTop-K~\cite{rtopk2025}}, draw=none]
        ]
      ]
      [Reduction\\before exact\\selection
        [{Delegate reduction}, fill=gray!8
          [{Dr.\ Top-k~\cite{gaihre2021drtopk}}, draw=none]
        ]
        [{Sampled reduction}, fill=gray!8
          [{Sample\-Select~\cite{ribizel2019sampleselect,ribizel2020parallel};\\Qrita~\cite{qrita2026};
            \textbf{HPC-Ops}}, draw=none]
        ]
      ]
      [History-dependent\\prediction
        [{Temporal prediction}, fill=gray!8
          [{GVR~\cite{gvr2026}}, draw=none]
        ]
      ]
    ]
  \end{forest}
  \caption{Mechanisms for exact \topk selection over a materialized score row.
    The leaves name representative GPU realisations.
    The four branches organize the survey; \cref{tab:survey} then distills their seven mechanisms
    into deployment-facing cost and execution categories.
    HPC-Ops belongs to sampled reduction; methods that alter the score
    producer or the sparsity
    contract are covered separately in \cref{sec:related}.}
  \label{fig:taxonomy}
\end{figure*}

For deployment, however, the algorithmic labels are not enough.  The decisive question is what
information each mechanism makes available before or during the mandatory inspection of a long
row, and what resource or signal it exchanges for lower complete-row traffic.  We examine the
mechanisms in turn and then use \cref{tab:survey} to synthesize their representative work, score
traffic, GPU execution, and sensitivity to the input data.

\paragraph{Order-maintaining selection.}
Sorting the row and keeping a prefix costs $\log L$ traversals, and none of that work reflects
that only $K$ elements are wanted. Bitonic \topk removes most of it by discarding the losing
half after each merge stage~\cite{shanbhag2018topk}, and GridSelect maintains a grid-wide
bounded queue whose insertion position is computed with a warp ballot, so the expensive sort and
merge run only when the queue fills~\cite{air2023}. Both attain $R_{\mathrm{eq}}=1$, which is
the floor of \cref{eq:floor}.

What they pay for it is a footprint. The retained structure lives in shared memory and
registers, which caps the rank at $256$ for a bitonic network and $2048$ for a warp- or
grid-level queue~\cite{air2023}. The $\mathcal{O}(\log^2K)$ cost of the merge network compounds
this: by the same authors' measurement a queue is faster only below $K\approx256$, above which
radix partitioning wins on identical hardware~\cite{air2023}. Reaching the one-scan traffic
floor through bounded ordered state therefore does not remove its rank-scaling limitation.

\paragraph{Boundary-dependent partitioning.}
Radix and bucket selection build a histogram over successive digits of an order-preserving key,
locate the bin holding the $K$th element, and refine within it, giving $D_{\mathrm{rad}}$
iterations for a key processed by a $D_{\mathrm{rad}}$-digit schedule. AIR and RadiK are
representative GPU realisations~\cite{air2023,radik2024}; the CUDA path of \texttt{torch.topk}
uses the same radix-selection family~\cite{pytorchradix2026}.

Radix exposes regular SIMD work, and its later active ranges can shrink sharply once a boundary
bin is known.  Its initial cost, however, is fixed by boundary timing: a conventional path
spends one complete traversal locating the first boundary and another classifying the row
against it, so $R_{\mathrm{eq}}\geq2$ before later active-range work is considered.
\Cref{sec:bg:cuda-path} traces the CUDA dependency that makes this reread structural and
explains which later passes round fusion can overlap.

Its dependability is the other side of the same coin: no input makes it wrong, and skew costs
traffic rather than correctness. Adversarially clustered keys that share leading bits eliminate
nothing in the early iterations, and the mitigation chooses between re-reading the row and
spilling a candidate buffer according to which moves fewer bytes~\cite{air2023,radik2024}.  This
bounds the cost of skew rather than removing it.

A numeric threshold search reaches the same result differently, testing candidate thresholds and
counting qualifying elements with warp ballots~\cite{rtopk2025}. Loading the row into shared
memory once and iterating on chip attains $R_{\mathrm{eq}}=1$, and the published evaluation
covers rows of a few hundred to a few thousand elements. That prerequisite is the constraint
rather than the iteration count: the evaluated rows fit in CTA-local shared memory, whereas the
long sparse-attention rows considered here do not.  Once the row spills, each of the
$\mathcal{O}(\log L)$ search rounds becomes a global traversal.

\paragraph{Reduction before exact selection.}
Spending a little work to shrink what an exact backend must examine can keep complete-row
traffic near the one-scan floor without requiring the row to remain on chip.  Unlike the
preceding mechanisms, its excess work is governed by the size of an intermediate reduction
rather than by a structurally required full-row reread. Dr.\ Top-k partitions the row and uses
per-partition delegates to prove that whole partitions cannot
contribute~\cite{gaihre2021drtopk}. The bound never fails, but the subrange width that controls
it has an interior optimum: a wider subrange shrinks the delegate set and enlarges the second
stage that must then be concatenated. The saving therefore rests on a constant tuned to the data
rather than on a property of the row.

Sampling obtains its reduction signal from the current row.  SampleSelect estimates splitters
from a sample and partitions the full input
exactly~\cite{ribizel2019sampleselect,ribizel2020parallel}, while Qrita fits a sampled tail
model before narrowing the quantile with a multi-pivot search~\cite{qrita2026}.  Prof-K uses a
random sample to size a one-pass filter and recovers the exact Top-$K$ with user-specified high
probability~\cite{profk2026}.  SampleSelect and Qrita retain deterministic exact completion
through full-row verification; Prof-K instead adopts a probabilistic correctness contract.
Exactness alone, however, does not determine whether the reduction repays its sampling work.  On
long, ragged sparse-attention rows, the decisive quantity is the candidate volume left by a
sublinear per-row view: a useful estimate must reduce the exact backend to a compact,
$O(K)$-scale tail rather than another row-scale problem.

\paragraph{History-dependent prediction.}
Consecutive decode steps produce strongly correlated scores, so the previous step's selection is
an excellent initial threshold. GVR uses that guess to initialize a full-row secant threshold
search and then refines exactly. Across its reported DSA decode layers, Phase~2 averages
$1.1$--$2.7$ threshold-search iterations~\cite{gvr2026}. The guess is strong enough that Phase~2
approaches a single threshold-search traversal on high-correlation layers, making the approach
highly effective on decode-shaped work.

Its prerequisite, though, is a property of the schedule rather than of the data, and cannot be
supplied when absent. There is no previous step at the first chunk of a prefill, on a single
offline call, or when replaying a saved score matrix, and on those calls the approach falls back
to the partitioning path and inherits its multi-pass traffic. Full-row counting remains
unavoidable: the prior initializes the search but each threshold is checked against the current
row, and candidate formation adds another traversal.  With $n_{\mathrm{search}}$
threshold-search traversals, its row-scaled traffic therefore satisfies $R_{\mathrm{eq}}\geq
n_{\mathrm{search}}+1$ even when the initial proposal is strong.

\Cref{tab:survey} compares the mechanisms along four deployment-relevant dimensions. The
GPU-mapping column identifies the dominant execution structure rather than measured speed. Data
sensitivity describes how much the executed work changes with score distribution or proposal
quality at fixed $L$ and $K$: \emph{low} denotes a largely fixed schedule, \emph{moderate}
data-dependent candidate or refinement work, and \emph{high} a fast path governed by the initial
boundary estimate.  Because every mechanism retains exact completion, this grade concerns
performance rather than correctness.

\begin{table}[!htbp]
  \centering
  \begin{threeparttable}
  \caption{Deployment-facing comparison of the mechanisms in \cref{fig:taxonomy} over one valid
    row.  Work is representative rather than an implementation-level bound;
    $R_{\mathrm{eq}}$ reports row-scaled global score traffic, normalized by one complete
    valid-row read; compact candidate-only exact work remains explicit in the Work column.  All rows
    denote exact-completion mechanisms.}
  \label{tab:survey}
  \footnotesize
  \setlength{\tabcolsep}{3.4pt}
  \renewcommand{\arraystretch}{1.16}
  \begin{tabularx}{\linewidth}{@{}
    >{\raggedright\arraybackslash}p{.225\linewidth}
    >{\raggedright\arraybackslash}p{.25\linewidth}
    >{\centering\arraybackslash}p{.12\linewidth}
    >{\raggedright\arraybackslash}p{.18\linewidth}
    >{\raggedright\arraybackslash}X@{}}
    \toprule
    \textbf{Mechanism} & \textbf{Representative row work} &
      $\boldsymbol{R_{\mathrm{eq}}}$ & \textbf{GPU mapping} &
      \textbf{Data sensitivity} \\
    \midrule
    Full ordering & $\mathcal O(L\log L)$ & $\log L$
      & General-purpose &
      Low \\
    Bounded ordered state & $\mathcal O(L\log^2 K)$ & $1$
      & Rank-limited & Low \\
    \rowcolor{gray!7}
    \textbf{Radix / bucket select} & $\mathcal O(D_{\mathrm{rad}}L)$
      & $\geq 2$
      & \textbf{Strong SIMD} & Low--Moderate \\
    On-chip threshold search & $\mathcal O(n_{\mathrm{thr}}L)$ & $1$
      & Capacity-limited & Moderate \\
    Delegate reduction & $\mathcal O(L+D_{\mathrm{del}}\sqrt{LK})$
      & $1+f_{\mathrm{del}}$ & Two-stage & Moderate \\
    \rowcolor{HYLightBlue!18}
    \textbf{Sampled reduction} &
      $\mathcal O(L_{\mathrm{samp}}+L+W_{\mathrm{cand}})$
      & $1+L_{\mathrm{samp}}/L$
      & \textbf{Strong SIMD} & Moderate \\
    Temporal prediction &
      $\mathcal O((n_{\mathrm{search}}{+}1)L+W_{\mathrm{cand}})$
      & $n_{\mathrm{search}}+1$
      & Strong SIMD & High \\
    \bottomrule
  \end{tabularx}
  \begin{tablenotes}[flushleft]
    \footnotesize
    \item $D_{\mathrm{rad}}$, $D_{\mathrm{del}}$, $n_{\mathrm{thr}}$, and $n_{\mathrm{search}}$ count radix,
      delegate-refinement, on-chip threshold-search, and temporal threshold-search rounds; $f_{\mathrm{del}}$ is the row
      fraction revisited after delegate filtering; $L_{\mathrm{samp}}$ is the aggregate sample size;
      and $W_{\mathrm{cand}}$ is candidate-only exact work after row-scale reduction.
  \end{tablenotes}
  \end{threeparttable}
\end{table}
\FloatBarrier

\paragraph{What the landscape leaves open.}
Taken together, the survey and \cref{tab:survey} expose a common tradeoff.  Order-maintaining
and on-chip methods exchange score traffic for excess ordering work or resident state; reduction
methods exchange it for a data-dependent intermediate set; temporal prediction exchanges it for
prior invocation state.  Production radix avoids these capacity and history prerequisites and
retains regular GPU execution, but pays an initial full-row reread.

The missing combination is therefore long-row support, regular GPU execution, current-row state,
and score traffic near one valid-row scan.  Sampled reduction has the appropriate information
source, but the retained-tail size of a sublinear view at the target rank is the unresolved
property.  The next subsection isolates why production radix obtains its first boundary too late
to close this gap; \cref{sec:insight} then evaluates whether current-row views supply the
required compact tail.

\subsection{The Production Radix Bottleneck}
\label{sec:bg:cuda-path}

Production radix provides self-contained, regular GPU execution, but its gap from the one-scan
floor is set by when the rank boundary becomes available.  Its pass count is not merely the
number of digits in the key; it follows from a CUDA data dependency between histogram
construction and boundary-dependent classification. \Cref{fig:bg:radix-path} separates this
structural dependency from the execution pressure within each pass.

\paragraph{Structural traffic.}
The first pass contributes every key to an on-chip digit histogram, and only the completed bin
counts allow a prefix scan to expose the bucket containing rank $K$. The scores streamed through
the CTA before that point cannot yet be classified by the resulting boundary. The implementation
must therefore touch the row again or retain a row-sized copy. Round fusion overlaps each
boundary classification with the next digit histogram, reducing the worst-case score reads of a
four-digit schedule from $8L$ to $5L$~\cite{air2023}.  On well-spread keys, an 11-bit digit also
shrinks later active ranges far below $L$.  Neither effect can fuse classification into the
initial pass that creates the first boundary, so its histogram--classification pair still
contributes $2L$ reads before candidate shrinkage can help.

\begin{figure}[!htbp]
  \centering
  \resizebox{\linewidth}{!}{%
  \begin{tikzpicture}[
    font=\footnotesize,
    box/.style={
      draw=HYDarkBlue!70,
      fill=HYLightBlue!14,
      rounded corners=2pt,
      minimum height=1.13cm,
      align=center,
      inner sep=3.5pt
    },
    input/.style={
      box,
      draw=BrickRed!72,
      fill=BrickRed!5
    },
    tail/.style={
      box,
      draw=ForestGreen!68!black,
      fill=ForestGreen!7
    },
    flow/.style={-{Latex[length=1.9mm]},line width=0.72pt,draw=HYDarkBlue!80,
      shorten <=1.5pt,shorten >=2.4pt},
    reread/.style={-{Latex[length=1.9mm]},line width=0.82pt,draw=BrickRed!80,
      shorten <=1.5pt,shorten >=2.4pt},
    candidate/.style={-{Latex[length=1.8mm]},line width=0.75pt,
      draw=ForestGreen!65!black,shorten <=1.5pt,shorten >=2.4pt},
    dependency/.style={-{Latex[length=2.0mm]},line width=0.84pt,
      draw=BrickRed!78,shorten <=1.5pt,shorten >=2.4pt},
    lane/.style={font=\fontsize{8}{9}\selectfont\bfseries,align=right,text=black!72}
  ]
    \node[font=\normalsize\bfseries,text=HYDarkBlue,anchor=west] at (0,5.05)
      {Conventional production radix};
    \node[font=\scriptsize,text=black!62,text width=3.35cm,
      anchor=east,align=right] at (15.10,4.55)
      {four-digit round-fused schedule\\
       $R_{\mathrm{eq}}\leq5$ ($\leq5L$ reads)};

    \node[lane,text=HYDarkBlue,text width=1.35cm,anchor=east] at (1.48,4.12)
      {Pass 1\\$L$ reads};
    \node[input, text width=2.05cm] (row1) at (2.75,4.12)
      {\textbf{FP32 score row}\\$L$ ordered keys};
    \node[box, text width=2.30cm] (hist) at (5.55,4.12)
      {\textbf{Digit histogram}\\2048 bins in SMEM\\atomic updates};
    \node[box,text width=2.28cm,minimum height=1.34cm] (locate) at (8.62,4.12)
      {\textbf{Prefix + locate}\\CUB scan\\CTA barrier\\[-1pt]
       \textcolor{BrickRed!82!black}{\scriptsize boundary available}};

    \draw[flow] (row1.east) -- (hist.west);
    \draw[flow] (hist.east) -- (locate.west);

    \node[lane,text=HYDarkBlue,text width=1.35cm,anchor=east] at (1.48,2.72)
      {Pass 2\\$L$ rereads};
    \node[input,text width=2.05cm] (row2) at (2.75,2.72)
      {\textbf{same score row}\\read from GMEM again};
    \node[box,text width=2.62cm] (classify) at (11.30,2.72)
      {\textbf{Classify + compact}\\
       test against boundary\\collect counters};
    \node[tail,text width=1.72cm] (candidates) at (14.18,2.72)
      {\textbf{candidate tail}\\retained indices};

    \draw[reread] (row2.east) -- node[midway,above=3pt,font=\scriptsize,
      text=BrickRed!82!black,inner sep=0pt] {initial $L$-value reread}
      (classify.west);
    \draw[dependency] (locate.east) to[out=0,in=100] (classify.north);
    \draw[candidate] (classify.east) -- (candidates.west);

    \node[lane,text=HYDarkBlue,text width=1.35cm,anchor=east]
      at (1.48,0.90)
      {Later\\digits};
    \node[tail,text width=1.72cm] (active) at (14.18,0.90)
      {\textbf{active range}\\shrinks with each digit};
    \node[box,text width=2.62cm] (fusion) at (11.30,0.90)
      {\textbf{filter + next histogram}\\round fusion};
    \node[tail,text width=2.08cm] (output) at (8.25,0.90)
      {\textbf{lower-bit refinement}\\exact $K$ indices};

    \draw[candidate] (candidates.south) -- (active.north);
    \draw[candidate] (active.west) -- (fusion.east);
    \draw[candidate] (fusion.west) -- (output.east);

    \begin{scope}
      \clip[rounded corners=2pt] (0,0.12) rectangle (15.10,-0.70);
      \fill[BrickRed!4] (0,0.12) rectangle (7.55,-0.70);
      \fill[orange!6] (7.55,0.12) rectangle (15.10,-0.70);
    \end{scope}
    \draw[HYDarkBlue!32,line width=0.65pt,rounded corners=2pt]
      (0,0.12) rectangle (15.10,-0.70);
    \draw[HYDarkBlue!22,line width=0.55pt] (7.55,0.08) -- (7.55,-0.66);

    \node[font=\footnotesize,align=center,text width=7.00cm] at (3.775,-0.29) {
      \textcolor{BrickRed!82!black}{\textbf{Structural traffic}}\\[-1pt]
      boundary after Pass 1 $\Rightarrow$ initial $L$-value reread};
    \node[font=\footnotesize,align=center,text width=7.00cm] at (11.325,-0.29) {
      \textcolor{orange!72!black}{\textbf{Per-pass pressure}}\\[-1pt]
      SMEM atomics \,\textbullet\, CTA barrier \,\textbullet\, compaction};

    \node[draw=ForestGreen!58!black,fill=ForestGreen!6,rounded corners=2pt,
      minimum height=0.64cm,text width=14.74cm,align=center,inner sep=3pt,
      font=\footnotesize,anchor=north west] at (0,-0.82) {
      \textcolor{ForestGreen!58!black}{\textbf{One-pass target}}\quad
      boundary before scan $\rightarrow$ \mbox{verify $+$ collect}
      $\rightarrow$ tail-only exact refinement};
  \end{tikzpicture}%
  }
  \caption{Why production radix selection rereads the score row.
    The first histogram consumes all $L$ keys before the
    rank boundary exists; classification therefore rereads the materialized row.
    Later round fusion can overlap work only after that boundary exists.
    The lower ledger separates this structural dependency from atomics and synchronization that
    determine the throughput of each pass, and states the execution ordering investigated in
    \cref{sec:insight}.}
  \label{fig:bg:radix-path}
\end{figure}

\FloatBarrier

\paragraph{Execution pressure.}
Each full-row histogram maps $L$ keys onto shared-memory bin counters, followed by a CTA-wide
prefix scan and barrier; classification then funnels qualifying elements through compaction
counters. The production path therefore combines repeated score processing with shared-memory
atomic contention and synchronization~\cite{gvr2026}. Digit width, bin layout, and counter
aggregation can improve the throughput of these stages, but they do not move the first boundary
earlier and cannot remove the structural reread. This distinction makes full-row traversals the
stable cross-implementation axis, while atomics and barriers explain why the cost of each
traversal remains architecture-sensitive.

The production bottleneck is therefore one of boundary timing: approaching the one-scan
floor requires an actionable upper-tail boundary before the mandatory row inspection, without
relaxing the deployment contract of \cref{sec:bg:interface}.  \Cref{sec:insight} tests whether
production score rows supply the required coarse localization, and \cref{sec:method} turns the
resulting evidence into an exact selector.
\FloatBarrier

\section{Motivation}
\label{sec:insight}

\subsection{Empirical Observations}
\label{sec:insight:observations}

Exact \topk selection must resolve the precise $K$th boundary, whereas an early filtering
boundary need only retain a compact superset of the answer.  This distinction raises two
workload questions: how much boundary error can be tolerated before the retained tail ceases to
be compact, and whether a regular partial view can locate that tail across ragged score rows.

We answer these questions using indexer scores produced by Hy4-Preview~\cite{hy4preview2026} and
captured immediately after the indexer MQA and before \topk selection. The measurements use
$K=2048$. Each capture materializes a batch of these FP32 score rows immediately before
selection.  For row $r$, we analyze the valid vector $\mathbf x_r=(x_{r,0},\ldots,x_{r,L_r-1})$.
When one row is clear from context, we omit $r$ and write $\mathbf x=(x_0,\ldots,x_{L-1})$.
Three captures span multiple indexer layers and prefill chunks, with maximum valid lengths of
$65{,}536$, $131{,}072$, and $244{,}650$.  In each capture, we retain rows with $L_r\ge4K$, so
every row supports a common upper-tail range, and select $4{,}096$ rows at equal spacing from
the eligible set.  We preserve their causal lengths and rank all valid FP32 values, so every
quantity below is an exact order statistic of the operator input.

\paragraph{Observation 1: large boundary error can leave only a compact upper tail.}
\label{sec:insight:volume}
Ordinary subscripts continue to identify positions.  To distinguish ranks from positions,
let $x_{(j)}^{\downarrow}$ denote the $j$-th largest value, so
$x_{(1)}^{\downarrow}\ge\cdots\ge x_{(L)}^{\downarrow}$.  An exact selector must ultimately
distinguish $x_{(K)}^{\downarrow}$ from $x_{(K+1)}^{\downarrow}$.
A coarse boundary has a different task: it need only retain a manageable upper tail that contains
the final answer.  We measure the accuracy--work relation directly.  Let
$\delta=x_{(K)}^{\downarrow}-x_{(K+1)}^{\downarrow}$
be the adjacent score gap demanded by exact selection.  We move the boundary downward by $\epsilon$
such gaps and count the complete-row values it retains:
\begin{equation}
  \theta(\epsilon)=x_{(K)}^{\downarrow}-\epsilon\delta,
  \qquad
  C(\epsilon)=\#\{i\in[L]:x_i\ge\theta(\epsilon)\}.
  \label{eq:insight:score-cost}
\end{equation}
Here $\epsilon$ measures boundary error in units of the exact adjacent gap.  $C(\epsilon)$ is the
exact number of complete-row values at or above $\theta(\epsilon)$, namely
the upper-tail size that remains after applying this coarse boundary.  Together they expose the
accuracy--work frontier:
$\epsilon$ controls boundary imprecision, and $C(\epsilon)$ records its cost in values that remain
to be resolved.

Adjacent-gap normalization requires $\delta>0$, so rows tied at the $K$th boundary are omitted
from this sweep.  \Cref{fig:motivation:structure}a varies $\epsilon$ over four orders of
magnitude.  A boundary displaced by $100\times$ the exact gap still retains a median of $1.03K$
values, and $95\%$ of rows retain no more than $1.15K$.  At $500\times$, the median is $1.18K$
and the $95$th percentile is $2.02K$.  The curve eventually bends upward, making the cost of
excessive error visible, but its long flat region is the important separation: score precision
can be relaxed by hundreds of adjacent-boundary gaps before retained work grows by the same
scale.  All three captures show the same flat-then-rising profile, so this tolerance persists
across the measured sequence lengths rather than appearing only at one favorable length.

The reason is local rank geometry.  One exact gap spans one adjacent pair, whereas lowering a
boundary through many comparable gaps advances through many ranks.  Under a smooth local score
density, a displacement of $\epsilon\delta$ adds approximately $\epsilon$ values, not a factor
of $\epsilon$ in retained work.  \Cref{sec:insight:theory} states this approximation precisely.
The experiment therefore identifies a broad target: a useful early signal need not approximate
the exact cutoff; it need only land within the flat part of the score-error/work curve.

\paragraph{Observation 2: regular partial views recover global upper-tail quantiles.}
The first observation supplies tolerance but not a cheap source of global information.  We next
ask whether regularly spaced positions preserve the complete-row upper-tail quantiles.  For a
period $s$, positions with the same residue modulo $s$ form one phase, and each phase reads
approximately a $1/s$ fraction of the row.  For a requested complete-row rank $\widetilde K$,
let $q=\lceil\widetilde K/s\rceil$ be the corresponding sample rank.  In phase $a$, we take its
$q$-th largest value and write $J_a(q)$ for that value's exact rank in the complete row.  A
calibrated view therefore satisfies $J_a(q)\approx\widetilde K$.

We sweep five fractions from $1/16$ through $1/256$, five requested ranks from $K$ through $2K$,
and every phase of the selected rows.  In \cref{fig:motivation:structure}b, their median
complete-row ranks follow the requested-rank diagonal throughout the sweep.  Even the smallest
$1/256$ view remains within $5\%$ of every requested rank, while views through $1/64$ remain
within $1.3\%$.

The median curves establish calibration but do not hide its uncertainty.  The two distribution
glyphs in the same panel show the central shape and median, while their capped spines expose the
fifth-to-$95$th-percentile range at requested rank $1.5K$: $1/64$ views span $1.23K$--$1.77K$,
while $1/128$ views span $1.09K$--$1.92K$.  Their medians remain within $0.03K$ of the request,
and the full uncensored distributions place $95\%$ of absolute errors within $0.32K$ and
$0.50K$, respectively.  The partial views are therefore globally centered, while their remaining
dispersion stays on the same $O(K)$ scale as the compact tail identified by Observation~1.

This representativeness has a position-agnostic basis.  If global ranks are exchangeable over
positions, each phase is a uniform finite-row sample and its order statistic is centered near
the requested global rank.  The exchangeable curve in panel~(d) evaluates the corresponding
finite-population RMS error at every measured causal length; this reference already predicts
useful calibration for all five view sizes.

The captured rows supply additional empirical margin.  Across the three captures, the first and
last position deciles contain $52$--$74\%$ of exact top-$K$ indices, compared with $20\%$ under
uniform positions (panel~(c)).  These endpoint-structured captured rows exhibit $14$--$28\%$
lower RMS rank error than the exchangeable reference at every measured view size (panel~(d)).
Thus regular partial views work in the general position-agnostic case, and the structured real
inputs are easier still.  The endpoint geometry itself is consistent with prior attention-sink
and recency analyses~\cite{xiao2024streamingllm,gu2025sink,barbero2025firsttoken}.

Together, the measurements establish both the tolerance available to a coarse boundary and a
low-cost source of global rank information.  The next subsection gives these two empirical
relations a theoretical basis.

\begin{figure}[!htbp]
  \centering
  \includegraphics[width=\linewidth]{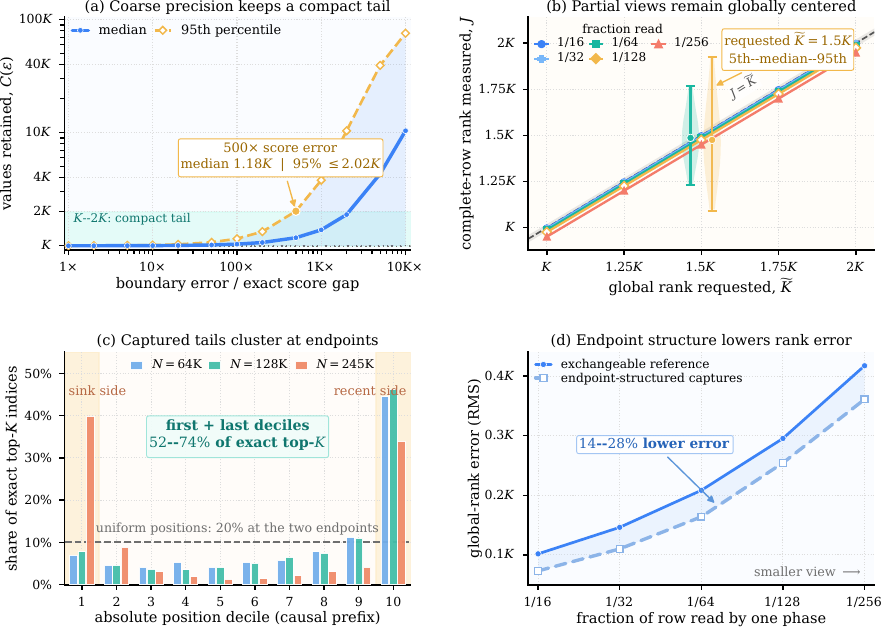}
  \caption{Method-independent evidence that a regular partial view provides a useful coarse
    upper-tail locator.
    (a) Complete-row work after lowering the exact boundary by $\epsilon$ adjacent score gaps; the
    shading extends from the exact $K$ floor to the $95$th percentile.
    (b) Median measured rank follows the requested global rank for five read fractions; the two
    distribution glyphs combine the observed central shape, median, and fifth-to-$95$th-percentile
    range for $1/64$ and $1/128$ views at $1.5K$.
    (c) Exact top-$K$ indices in all three captures concentrate in the first and last absolute
    position deciles, the broad endpoint geometry associated with sink and recency regions.
    (d) At requested rank $1.5K$, endpoint-structured captured rows
    have $14$--$28\%$ lower RMS
    rank error than the exact position-exchangeable reference across all five read fractions.}
  \label{fig:motivation:structure}
\end{figure}
\FloatBarrier

\subsection{Theoretical Basis}
\label{sec:insight:theory}

The observations expose two distinct resolutions.  In value space, the issue is how a score
displacement around the exact boundary translates into additional retained ranks.  In position
space, the issue is how an order statistic from a regular partial view maps back to the complete
row.  Local and finite-population order statistics make both relations explicit, while
contiguous-interval arithmetic explains the additional margin supplied by the observed endpoint
structure.

\paragraph{Local order statistics produce the flat cost region.}
Let $F$ be a smooth local approximation to a row's score distribution, $f$ its density, and
$\xi(u)=F^{-1}(1-u)$ its upper quantile function.  Using
$x_{(j)}^{\downarrow}\approx\xi(j/L)$, a first-order expansion around rank $K$
gives~\cite{davidnagaraja2003}
\begin{equation}
  \delta=x_{(K)}^{\downarrow}-x_{(K+1)}^{\downarrow}
  \approx\frac{1}{Lf(x_{(K)}^{\downarrow})},
  \qquad
  x_{(K)}^{\downarrow}-x_{(C)}^{\downarrow}
  \approx\frac{C-K}{Lf(x_{(K)}^{\downarrow})}.
  \label{eq:insight:local-spacing}
\end{equation}
For $C=C(\epsilon)$, the threshold definition in \cref{eq:insight:score-cost} implies
$x_{(C)}^{\downarrow}\ge\theta(\epsilon)>x_{(C+1)}^{\downarrow}$.  Thus $\epsilon\delta$ lies
between the score displacements to ranks $C$ and $C+1$.  Applying the two
local-spacing approximations yields, up to this one-rank discretization,
\begin{equation}
  C(\epsilon)-K\approx\epsilon.
  \label{eq:insight:score-to-rank}
\end{equation}
The unknown local density and row length cancel.  At $\epsilon=500$, for example, the relation
predicts a tail near $K+500\approx1.24K$, close to the measured $1.18K$ median.  The empirical ribbon then
captures the remaining variation in local score geometry.  This rank-scale interpretation explains
the flat opening of \cref{fig:motivation:structure}a: hundreds of exact score gaps add hundreds of
values to an existing $K$-sized tail, rather than multiplying its size by hundreds.

\paragraph{Exchangeability supplies the position-agnostic baseline.}
For a valid row of length $L$, phase $a$ contains
$n_a\in\{\lfloor L/s\rfloor,\lceil L/s\rceil\}$ positions.  Under positional exchangeability it is
a uniform $n_a$-subset of the global rank order.  The complete-row rank $J_a(q)$ of its $q$-th
largest value is a finite-population order statistic~\cite{oneill2022orderstat,davidnagaraja2003}
with
\begin{equation}
  \mathbb{E}[J_a(q)]=\frac{q(L+1)}{n_a+1}\approx qs\approx\widetilde K,
  \qquad
  \operatorname{Var}[J_a(q)]
  =\frac{q(n_a-q+1)(L+1)(L-n_a)}{(n_a+1)^2(n_a+2)}.
  \label{eq:insight:exchangeable}
\end{equation}
Because $n_a\approx L/s$ and $q=\lceil\widetilde K/s\rceil$, the expectation reduces to the
requested global rank up to the one-sample rounding margin, giving the diagonal in
\cref{fig:motivation:structure}b.  The RMS error around that request is
$\sqrt{\operatorname{Var}[J_a(q)]+(\mathbb E[J_a(q)]-\widetilde K)^2}$, so the exact reference
retains both the finite-row variance and the small rounding bias.  Ignoring these edge effects when
$L\gg\widetilde K$, the variance reduces to approximately
$\widetilde K(s-1)(1-\widetilde K/L)$; shrinking the view therefore increases uncertainty only at a
square-root rate.  Evaluating the exact expression for every measured causal length and both
possible phase sizes produces the exchangeable RMS reference in panel~(d), rather than a curve
fitted to the measurements.

\paragraph{Contiguous boundary regions distribute themselves across phases.}
For any contiguous boundary region $\mathcal B\subseteq[L]$ of $m$ positions,
\begin{equation}
  \#\{i\in\mathcal B:i\bmod s=a\}\in
  \left\{\left\lfloor\frac{m}{s}\right\rfloor,
          \left\lceil\frac{m}{s}\right\rceil\right\}.
  \label{eq:insight:interval}
\end{equation}
Writing $m=\ell s+\nu$ with $0\le\nu<s$ makes the balance explicit: every residue occurs
$\ell$ times in the region and exactly $\nu$ residues occur once more.
Hence counts across phases differ by at most one.  When elevated upper-tail probability extends
across broad endpoint regions, as measured in \cref{fig:motivation:structure}c, every phase receives
nearly the same number of opportunities to observe it.  This shared coverage can suppress
phase-to-phase rank dispersion; a tail locked to one residue instead creates the classical
systematic-sampling alias~\cite{bellhouse1988systematic}.  The interval identity therefore supplies
the geometric mechanism behind the measured reduction, while
panel~(d) measures its magnitude.

\subsection{Design Implications}
\label{sec:insight:consequence}

The evidence supports a coarse-to-exact decomposition.  Observation~1 shows that locating a
compact upper tail requires far less score resolution than identifying the final $K$th boundary.
Observation~2 shows that a regular partial view supplies a globally centered estimate at
precisely this coarser rank scale, with the observed endpoint structure adding margin rather
than defining the applicability of the estimate.  An early estimator should therefore be judged
by the work it leaves for exact resolution, not by whether it reproduces the final cutoff.

This division assigns the two stages different responsibilities.  The partial view provides an
early boundary with enough global context to organize the upper tail.  The complete-row stage
remains authoritative: it sees every valid value, verifies that the retained tail is sufficient,
and leaves the final ordering to an exact operation over that compact set.  Sampling error can
change the amount of retained work, but the complete-row check prevents it from changing the
final output.

The decomposition supplies the early boundary required by \cref{sec:bg:cuda-path}.  The
mandatory pass can classify and collect values on their first encounter, while subsequent exact
work remains confined to the upper tail.  \Cref{sec:method} develops this principle into a
sample-guided exact selector.
\FloatBarrier

\section{Method}
\label{sec:method}

\subsection{Algorithm Overview}
\label{sec:method:overview}

Consider one valid score row $\mathbf x=(x_0,\ldots,x_{L-1})\in\mathbb R^L$, with $L>K$.  Rows
are selected independently, so the three phases below operate on this row. HPC-Ops separates
boundary proposal from exact selection.  Phase~1 reads a regular partial view and proposes a
row-local coarse boundary.  Phase~2 performs the authoritative complete-row traversal, using
that boundary to persist candidates, count them, and construct the first exact histogram on
their first encounter.  Phase~3 consumes that histogram, commits resolved digit prefixes, and
carries only the unique boundary group through the remaining FP32 digits.  An underfilled
proposal may first expand through a bounded secondary boundary before exact coarse recovery;
rows without a usable view enter exact coarse recovery directly.

\begin{figure}[!htbp]
  \centering
  \begin{tikzpicture}[
    font=\footnotesize,
    phase/.style={
      rounded corners=2.5pt,
      minimum height=1.48cm,
      align=center,
      inner sep=4pt,
      line width=0.75pt
    },
    pOne/.style={phase,draw=HYDarkBlue!68,fill=HYLightBlue!25},
    pTwo/.style={phase,draw=RoyalBlue!72,fill=RoyalBlue!7},
    pThree/.style={phase,draw=ForestGreen!68!black,fill=ForestGreen!7},
    recover/.style={
      rounded corners=2pt,
      minimum height=0.78cm,
      align=center,
      inner sep=3.5pt,
      font=\scriptsize,
      line width=0.68pt,
      draw=orange!75!black,
      fill=orange!7
    },
    fallback/.style={recover,draw=BrickRed!70,fill=BrickRed!5},
    certified/.style={recover,draw=ForestGreen!65!black,fill=ForestGreen!7},
    mapbox/.style={
      rounded corners=3pt,
      draw=HYDarkBlue!35,
      fill=black!1,
      minimum height=2.42cm,
      inner sep=4pt,
      line width=0.62pt
    },
    miniOne/.style={rounded corners=1.5pt,draw=HYDarkBlue!62,fill=HYLightBlue!28,
      minimum height=0.54cm,align=center,font=\scriptsize,inner sep=2pt},
    miniTwo/.style={rounded corners=1.5pt,draw=RoyalBlue!65,fill=RoyalBlue!8,
      minimum height=0.54cm,align=center,font=\scriptsize,inner sep=2pt},
    miniThree/.style={rounded corners=1.5pt,draw=ForestGreen!65!black,fill=ForestGreen!7,
      minimum height=0.54cm,align=center,font=\scriptsize,inner sep=2pt},
    flow/.style={-{Latex[length=2.0mm]},line width=0.78pt,draw=HYDarkBlue!80},
    exact/.style={-{Latex[length=1.9mm]},line width=0.75pt,draw=ForestGreen!65!black},
    branch/.style={-{Latex[length=1.8mm]},line width=0.72pt,draw=orange!78!black},
    fail/.style={-{Latex[length=1.8mm]},line width=0.72pt,draw=BrickRed!75},
    thinflow/.style={-{Latex[length=1.5mm]},line width=0.62pt,draw=HYDarkBlue!65}
  ]
    \node[pOne,text width=2.65cm] (p1) at (1.50,6.12) {
      \textcolor{HYDarkBlue}{\textbf{Phase 1}}\\[-1pt]
      \textbf{Coarse boundary}\\[-1pt]
      \textbf{localization}\\[2pt]
      row-local view $\mathcal V_{r,s}$ $\rightarrow$ histogram\\
      provisional $\tau_0$ and nested $\tau_1$};
    \node[pTwo,text width=4.15cm] (p2) at (6.68,6.12) {
      \textcolor{RoyalBlue!82!black}{\textbf{Phase 2}}\\[-1pt]
      \textbf{Fused validation + formation}\\[2pt]
      one complete-row scan $\rightarrow$\\[-1pt]
      $\mathcal A(\tau_0),\ C(\tau_0)$\\[-1pt]
      $\widehat h_0(\cdot;\tau_0)$};
    \node[pThree,text width=3.45cm] (p3) at (13.20,6.12) {
      \textcolor{ForestGreen!62!black}{\textbf{Phase 3}}\\[-1pt]
      \textbf{Exact radix refinement}\\[2pt]
      $\mathcal Q_j\rightarrow b_j$ by digit prefix\\
      commit $\mathcal R_j$; retain $\mathcal Q_{j+1}$};

    \draw[flow] (p1.east) -- node[midway,above=2.5pt,font=\scriptsize,
      text=HYDarkBlue!80,inner sep=0pt]
      {propose $\tau_0$} (p2.west);
    \draw[exact] (p2.east) --
      node[midway,above=2.5pt,font=\scriptsize,text=ForestGreen!58!black,
        inner sep=0pt]
        {$(\mathcal E_0,\mathcal Q_0,k_0,h_0)$}
      node[midway,below=3.0pt,font=\scriptsize,text=ForestGreen!58!black,
        inner sep=0pt]
        {$C(\tau_0)\ge K$}
      (p3.west);

    \node[recover,text width=4.24cm] (band) at (6.68,4.24) {
      \textbf{Bounded band extension}\\[-1pt]
      append $\mathcal A(\tau_1)\setminus\mathcal A(\tau_0)$};
    \node[fallback,text width=3.34cm] (full) at (6.68,3.03) {
      \textbf{Exact coarse recovery}\\[-1pt]
      rebuild boundary from all $L$ values};
    \node[certified,text width=3.28cm] (cert) at (12.08,3.62) {
      \textbf{certified interface}\\[-1pt]
      $(\mathcal E_0,\mathcal Q_0,k_0,h_0)$};

    \draw[branch] (p2.south) -- node[right,font=\scriptsize,text=orange!72!black]
      {$C(\tau_0)<K$} (band.north);
    \draw[exact] (band.east) -- node[pos=0.58,above=3.3pt,inner sep=0pt,
      font=\scriptsize,text=ForestGreen!58!black] {enough} (cert.west);
    \draw[fail] (band.south) -- node[right,font=\scriptsize,text=BrickRed!72]
      {still short} (full.north);
    \draw[dashed,line width=0.72pt,draw=BrickRed!75] (p1.south) -- (1.50,3.03);
    \draw[-{Latex[length=1.8mm]},dashed,line width=0.72pt,draw=BrickRed!75]
      (1.50,3.03) -- node[midway,above=2.5pt,font=\scriptsize,text=BrickRed!72,
      inner sep=0pt] {no usable view} (full.west);
    \draw[exact] (full.east) -- (cert.south west);
    \draw[exact] (cert.north east) to[out=46,in=-104] (p3.south);

    \draw[HYDarkBlue!24,line width=0.6pt] (0,2.30) -- (15.1,2.30);
    \node[font=\fontsize{9.5}{10.5}\selectfont\bfseries,text=HYDarkBlue,anchor=west]
      at (0,2.00)
      {Sampled execution mappings};
    \node[font=\scriptsize,text=black!60,anchor=east] at (15.1,2.02)
      {same phases and invariant; different exposed parallelism};

    \node[mapbox,text width=5.62cm] (onebox) at (3.02,0.54) {};
    \node[font=\footnotesize\bfseries,text=HYDarkBlue,anchor=west]
      at (0.35,1.52) {Persistent one CTA per row};
    \node[miniOne,text width=1.28cm] (o1) at (1.15,0.68)
      {Phase 1\\locate $\tau_0$};
    \node[miniTwo,text width=1.42cm] (o2) at (3.02,0.68) {Phase 2\\scan row};
    \node[miniThree,text width=1.28cm] (o3) at (4.91,0.68)
      {Phase 3\\$\mathcal Q_j\!\rightarrow\!\mathcal Q_{j+1}$};
    \draw[thinflow] (o1) -- (o2);
    \draw[thinflow] (o2) -- (o3);
    \node[font=\scriptsize,text=black!63,align=center] at (3.02,-0.32)
      {phase state remains local; independent rows\\are taken in completion order};

    \node[mapbox,text width=8.44cm] (splitbox) at (10.76,0.54) {};
    \node[font=\footnotesize\bfseries,text=HYDarkBlue,anchor=west]
      at (6.72,1.52) {KV-split for a few long rows};
    \node[font=\scriptsize\bfseries,text=RoyalBlue!80!black,align=center]
      at (9.38,1.22) {split Phase 2};

    \node[miniOne,text width=1.34cm,minimum height=0.72cm]
      (splitP1) at (7.58,0.56) {Phase 1\\locate $\tau_0$};

    \node[miniTwo,text width=1.42cm,minimum height=0.30cm] (s20) at (9.38,0.90)
      {scan $\Omega_0$};
    \node[miniTwo,text width=1.42cm,minimum height=0.30cm] (s21) at (9.38,0.56)
      {scan $\Omega_1$};
    \node[miniTwo,text width=1.42cm,minimum height=0.30cm] (s22) at (9.38,0.22)
      {scan $\Omega_{P-1}$};

    \node[rounded corners=1.5pt,draw=orange!70!black,fill=orange!6,
      text width=1.62cm,minimum height=1.22cm,align=center,font=\scriptsize]
      (reduce) at (11.70,0.56) {merge\\$\biguplus_p\mathcal A_p$\\$\sum_p C_p(\tau)$\\
      $\sum_p\widehat h_{0,p}$};
    \node[miniThree,text width=1.18cm,minimum height=1.22cm] (finish)
      at (14.02,0.56) {Phase 3\\radix\\refine};

    \draw[thinflow] (splitP1.east) to[out=0,in=180] (s20.west);
    \draw[thinflow] (splitP1.east) -- (s21.west);
    \draw[thinflow] (splitP1.east) to[out=0,in=180] (s22.west);
    \draw[thinflow] (s20.east) to[out=0,in=157] (reduce.west);
    \draw[thinflow] (s21.east) -- (reduce.west);
    \draw[thinflow] (s22.east) to[out=0,in=203] (reduce.west);
    \draw[thinflow] (reduce) -- (finish);
    \node[font=\scriptsize,text=black!63,align=center] at (10.76,-0.45)
      {row-local boundary \;\textbullet\; disjoint scans \;\textbullet\;
       additive merge \;\textbullet\; one finisher};
  \end{tikzpicture}
  \caption{The sample-guided exact-selection pipeline and its two
    sampled execution mappings.
    The upper path separates boundary proposal, complete-row certification, and exact FP32
    refinement; an underfilled proposal expands through a precomputed band and otherwise enters an
    exact coarse recovery.  Phase~2 converts either a certified proposal histogram or
    an exact coarse recovery into the common
    $(\mathcal E_0,\mathcal Q_0,k_0,h_0)$ refinement interface.  In Phase~3, each digit commits
    strictly better groups and advances only the unique boundary group.  The lower path maps the
    same phases either to one CTA per row or to a row-local boundary, partitioned validation,
    additive merging, and one-finisher refinement for KV-split execution.}
  \label{fig:method:overview}
\end{figure}

\FloatBarrier

This organization lets the mandatory complete-row pass validate the proposed boundary and form
its admitted tail together.  The partial view determines how much work remains, whereas the
complete row and the FP32 refinement determine the returned indices.

Every route entering Phase~3 produces the same initialization
$(\mathcal E_0,\mathcal Q_0,k_0,h_0)$: a committed prefix, one unresolved frontier, the remaining
quota, and the frontier's leading-digit histogram.  Its certified admitted state is
$\mathcal S_0:=\mathcal E_0\uplus\mathcal Q_0\subseteq[L]$, satisfying
\begin{equation}
  |\mathcal S_0|\ge K,
  \qquad
  i\in\mathcal S_0,\ j\notin\mathcal S_0
  \quad\Longrightarrow\quad x_i\ge x_j.
  \label{eq:method:certified-invariant}
\end{equation}
The boundary construction guarantees the ordering relation, and the complete-row count establishes
its sufficiency.  Exact refinement does not order this state wholesale.  At each FP32 digit, it
commits every group strictly ahead of the rank boundary and carries only the unique boundary group
to the next digit.  A conservative proposal therefore changes only the initial refinement volume;
an underfilled proposal is detected before any output is committed.

\Cref{fig:method:overview} summarizes the logical pipeline and its two sampled execution
mappings. The three phases preserve the same certified-state invariant whether one CTA owns a
row or several CTAs partition its complete-row traversal.

\subsection{Phase 1: Coarse Boundary Localization}
\label{sec:method:estimate}

Phase~1 locates its boundary in an inexpensive coarse ordering.
Let $\operatorname{fp16}(x)$ denote round-to-nearest FP16 projection, and let
$\kappa_{16}^{\downarrow}(x)$ be the descending sortable key of that projection, so a larger
projected value receives a smaller integer key.
Grouping adjacent keys by their 11 most significant bits gives
\begin{equation}
  c(x)=\left\lfloor
    \frac{\kappa_{16}^{\downarrow}(x)}{2^5}
  \right\rfloor
  \in\{0,\ldots,B_c-1\},
  \qquad B_c=2^{11}.
  \label{eq:method:coarse-map}
\end{equation}
This coarse map is monotone and many-to-one:
\begin{equation}
  x_i>x_j \quad\Longrightarrow\quad c(x_i)\le c(x_j).
  \label{eq:method:coarse-order}
\end{equation}
A larger score may share a bin with a smaller one but cannot enter a worse bin.  Phase~3 later
resolves the order within each coarse bin from the original FP32 values.

Across rows, we distribute the regular views with a balanced deterministic phase schedule.  Let
$\pi_s$ be a fixed permutation of the $s$ residue classes and assign row $r$ the phase
$a_r=\pi_s(r\bmod s)$.  For its valid length $L_r$, the locator reads
\begin{equation}
  \mathcal{V}_{r,s}=\{i\in[L_r]:i\equiv a_r\pmod s\}.
  \label{eq:method:view}
\end{equation}
Every block of $s$ rows consequently uses each phase once.  This removes a dataset-wide preference
for any one residue class while preserving regular, deterministic access within a row.  It also
allows cooperating execution units to reconstruct the same view from row identity alone.  The
view contains approximately $L_r/s$ values, enough to expose an upper-tail order statistic while
remaining a fraction of one complete traversal.

The sampled values form a coarse-key histogram and inclusive prefix count
\begin{equation}
  h_{r,s}(b)=\sum_{i\in\mathcal{V}_{r,s}}\mathbf{1}[c(x_{r,i})=b],
  \qquad
  H_{r,s}(b)=\sum_{v=0}^{b}h_{r,s}(v).
  \label{eq:method:sample-hist}
\end{equation}

Localization targets the same complete-row rank used in Observation~2.  Let
$\widetilde K=\rho K$, where $\rho>1$ supplies a compact retention margin, and request sample rank
$q=\lceil\widetilde K/s\rceil$.
The provisional boundary for row $r$ is the first coarse bin whose prefix contains that sample rank,
\begin{equation}
  \tau_{0,r}=\min\{b:H_{r,s}(b)\ge q\}.
  \label{eq:method:primary-threshold}
\end{equation}
Each row therefore derives its own bin threshold $\tau_{0,r}$ while sharing the same global-rank
target.  The relationship in \cref{sec:insight:observations,sec:insight:theory} explains why the
resulting boundary rank is centered near $\widetilde K$, rather than growing with $L_r$.  We choose
$\widetilde K=1.5K$ in the evaluated configuration, within the
compact $O(K)$ tail established by
Observation~1 and the rank range evaluated by Observation~2.  The margin favors a compact superset
over a boundary that stops at exactly rank $K$.

A deeper target $\widetilde K_1>\widetilde K$ may be recorded during the same prefix scan, with
$q_1=\lceil\widetilde K_1/s\rceil$ and $\tau_{1,r}\ge\tau_{0,r}$. The primary boundary defines
the common path; the secondary boundary remains dormant unless the complete-row count exposes an
underfilled proposal.  Both boundaries come from the same histogram, so this recovery margin
requires neither a denser view nor another sample traversal.  Phase~1 therefore returns a
row-local primary boundary and, when enabled, a nested recovery boundary. The row subscript is
omitted from $\tau_0$ and $\tau_1$ below when the row is clear; rows whose views cannot support
the requested sample rank enter the exact route directly.

These row-local boundaries determine the downstream candidate volume and whether secondary or
exact recovery is needed.  Phase~2 materializes the corresponding admitted region over all valid
scores and certifies its sufficiency; Phase~3 resolves the exact FP32 order.

\FloatBarrier
\subsection{Phase 2: Fused Validation and Candidate Formation}
\label{sec:method:validate}

For a coarse threshold $\tau$, define its admitted upper tail and size as
\begin{equation}
  \mathcal{A}(\tau)=\{i\in[L]:c(x_i)\le\tau\},
  \qquad
  C(\tau)=|\mathcal{A}(\tau)|.
  \label{eq:method:candidate-set}
\end{equation}
Because $c$ is monotone, the admitted prefix $c(x)\le\tau$ forms a contiguous upper tail in FP32
score space.  Its cutoff is the smallest FP32 value admitted by that prefix.
The inclusive boundary admits an entire coarse bin, so reduced precision may enlarge the retained
tail but cannot cut through that bin.  This $C(\tau)$ is the algorithmic counterpart of
$C(\epsilon)$ in \cref{eq:insight:score-cost}: both denote retained complete-row work, with their
arguments identifying how the threshold is chosen.

\begin{lemma}[Validated Top-$K$ Containment]
\label[lemma]{lem:method:containment}
If $C(\tau)\ge K$, then $\mathcal{A}(\tau)$ contains a Top-$K$ solution for $\mathbf{x}$.
\end{lemma}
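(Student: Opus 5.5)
The argument rests on two facts. First, the admitted set $\mathcal A(\tau)$ is upward closed in score: by \cref{eq:method:coarse-order}, if $i\in\mathcal A(\tau)$ and $x_j>x_i$, then $c(x_j)\le c(x_i)\le\tau$, so $j\in\mathcal A(\tau)$. Equivalently, every $j\notin\mathcal A(\tau)$ satisfies $x_j\le x_i$ for every $i\in\mathcal A(\tau)$. If not, some $j\notin\mathcal A(\tau)$ would have $x_j>x_i$ for an admitted $i$, which forces $j\in\mathcal A(\tau)$. Inputs are NaN-free with $\pm\infty$ ordered numerically, so FP32 order is total and the comparison is well defined.

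Second, I build a Top-$K$ solution inside $\mathcal A(\tau)$. Sort the indices of $\mathcal A(\tau)$ by descending score, breaking ties arbitrarily, and let $T$ be the first $K$. This is possible because $C(\tau)\ge K$. $T$ consists of $K$ distinct valid indices. Exactness as defined in \cref{sec:bg:interface} requires that no unselected valid index have a larger score than any selected one. Take $j\notin T$ and $i\in T$ and split into two cases. If $j\in\mathcal A(\tau)\setminus T$, then $x_j\le x_i$ by the choice of $T$ as a descending prefix of $\mathcal A(\tau)$. If $j\notin\mathcal A(\tau)$, then $x_j\le x_i$ by the first fact. Hence $T\subseteq\mathcal A(\tau)$ is a Top-$K$ solution.

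The only delicate point is ties at rank $K$. The contract fixes selected values rather than indices, so the lemma should be read as saying that some admissible solution lies in $\mathcal A(\tau)$, not that every one does. An equal-valued index outside $\mathcal A(\tau)$ cannot exist anyway: equal scores have equal FP16 projections and therefore equal coarse bins, so all indices of a given value are either admitted together or excluded together. I would also note that the first fact uses only monotonicity of $c$ and holds even though $c$ is many-to-one. The role of the hypothesis $C(\tau)\ge K$ is purely cardinality. This also establishes the invariant \cref{eq:method:certified-invariant} for $\mathcal S_0=\mathcal A(\tau_0)$.
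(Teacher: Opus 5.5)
Your proof is correct and follows the paper's argument: monotonicity of $c$ gives that every admitted score dominates every excluded one, and the $K$ largest admitted indices (which exist because $C(\tau)\ge K$) then form an exact Top-$K$ solution. One minor caveat concerns your side remark that equal scores always share a coarse bin: it is not needed, and it can fail for $+0$ and $-0$ under a bitwise sortable key, which places them in adjacent bins; your main argument never uses it, since exactness only requires $x_j\le x_i$.
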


\begin{proof}
For any admitted $i$ and excluded $j$, $c(x_i)\le\tau<c(x_j)$.  If $x_j>x_i$, the monotonicity in
\cref{eq:method:coarse-order} would require $c(x_j)\le c(x_i)$, a contradiction.  Hence every
admitted value is no smaller than every excluded value.  Because the complete boundary bin is
admitted, the $K$ largest values inside $\mathcal{A}(\tau)$ form a valid exact Top-$K$ solution.
\end{proof}

For exact refinement, let $\kappa_{32}^{\downarrow}(x)$ be the descending sortable key of
the original FP32 score.  We partition its 32 bits into $D=3$ digits of widths $11+11+10$ and write
$\operatorname{dig}_j(\kappa_{32}^{\downarrow}(x))$ for digit $j$.
Before the complete-row traversal, Phase~2 converts $\tau_0$ once to this FP32 cutoff.  It then
compares the original FP32 scores directly against the cutoff while materializing
$\mathcal A(\tau_0)$, storing its indices, and measuring $C(\tau_0)$.  On the same encounter, each
admitted FP32 value contributes to a
proposal histogram
over the leading exact digit, producing
\begin{equation}
  \widehat h_0(b;\tau_0)
  =\sum_{i\in\mathcal{A}(\tau_0)}
    \mathbf{1}[\operatorname{dig}_0(\kappa_{32}^{\downarrow}(x_i))=b].
  \label{eq:method:first-exact-hist}
\end{equation}
The hat records that this histogram belongs to a proposed admitted tail; the unadorned $h_0$ is
reserved below for the certified frontier passed to Phase~3.  The pass therefore produces a
persisted admitted tail $\mathcal{A}(\tau_0)$, its containment certificate $C(\tau_0)$, and its
leading exact-digit histogram.  On the primary path this tail becomes the initial FP32 refinement
frontier.  Candidate formation and the proposal histogram are fused because they inspect the same
admitted values.  Unlike the conventional path in \cref{fig:bg:radix-path}, no complete traversal
is spent discovering a boundary that can only be used after the row has passed; the provisional
boundary already exists when the scan starts.

If $C(\tau_0)=K$, the certified tail is already the answer; if $C(\tau_0)>K$, it becomes the
initial FP32 refinement frontier.  Estimation error on the loose side increases $C$ and hence
refinement work, while error on the tight side is detected as underfill.

\paragraph{Underfilled views.}
When $C(\tau_0)<K$, the nested boundary from Phase~1 permits a bounded recovery before rebuilding an
exact boundary.  A second scan appends only the band
\begin{equation}
  \Delta\mathcal{A}
  =\mathcal{A}(\tau_1)\setminus\mathcal{A}(\tau_0)
  =\{i:\tau_0<c(x_i)\le\tau_1\}
  \label{eq:method:rescue-band}
\end{equation}
and extends the proposal histogram by
\begin{equation}
  \widehat h_0(b;\tau_1)
  =\widehat h_0(b;\tau_0)
   +\sum_{i\in\Delta\mathcal A}
    \mathbf{1}[\operatorname{dig}_0(\kappa_{32}^{\downarrow}(x_i))=b].
  \label{eq:method:rescue-hist}
\end{equation}
The enlarged set is certified by the same complete-row count.  Once sufficient,
$\mathcal A(\tau_1)$ replaces $\mathcal A(\tau_0)$ as the initial refinement frontier; only
continued underfill invokes exact coarse recovery.  The secondary threshold therefore provides a
bounded expansion of the candidate superset without changing the exact-selection criterion.

If no secondary boundary is enabled, or if the enlarged set remains underfilled, the selector
enters exact coarse recovery.  It builds the coarse histogram over all $L$ values and identifies
$\tau_\star=\min\{b:C(b)\ge K\}$ as the exact coarse boundary.
A subsequent traversal commits every bin strictly ahead of $\tau_\star$, retains the complete
boundary bin, discards the remaining bins, and constructs the leading exact-digit histogram only
over that retained frontier.  The resulting initialization is
\begin{equation}
  \mathcal E_0=\{i:c(x_i)<\tau_\star\},\;
  \mathcal Q_0=\{i:c(x_i)=\tau_\star\},\;
  k_0=K-|\mathcal E_0|,\;
  h_0(b)=\sum_{i\in\mathcal Q_0}
    \mathbf{1}[\operatorname{dig}_0(\kappa_{32}^{\downarrow}(x_i))=b].
  \label{eq:method:exact-init}
\end{equation}

Consequently, every route that requires refinement supplies
$(\mathcal E_0,\mathcal Q_0,k_0,h_0)$ with the same semantics.  A certified sampled route sets
\begin{equation}
  \mathcal E_0=\varnothing,
  \qquad
  \mathcal Q_0=\mathcal A(\tau),
  \qquad
  k_0=K,
  \qquad
  h_0(\cdot)=\widehat h_0(\cdot;\tau),
  \label{eq:method:sampled-init}
\end{equation}
where $\tau\in\{\tau_0,\tau_1\}$ is the boundary whose tail passed validation; exact coarse
recovery supplies \cref{eq:method:exact-init}.  In both cases,
$\mathcal S_0=\mathcal E_0\uplus\mathcal Q_0$ satisfies
\cref{eq:method:certified-invariant}.  This route-independent tuple is the interface between
validation and refinement.

\subsection{Phase 3: Exact Radix Refinement}
\label{sec:method:refine}

Phase~3 consumes $(\mathcal E_0,\mathcal Q_0,k_0,h_0)$.  At the beginning of round
$j\in\{0,\ldots,D-1\}$, $\mathcal E_j$ contains the committed indices,
$\mathcal Q_j$ is the unique unresolved frontier, and $k_j=K-|\mathcal E_j|$ is the remaining
quota.  Its histogram and inclusive prefix are
\begin{equation}
  h_j(b)=\sum_{i\in\mathcal{Q}_j}
    \mathbf{1}[\operatorname{dig}_j(\kappa_{32}^{\downarrow}(x_i))=b],
  \qquad
  H_j(b)=\sum_{v=0}^{b}h_j(v).
  \label{eq:method:refine-prefix}
\end{equation}
Because smaller ordered keys denote larger FP32 values, the boundary digit is the first bin whose
prefix reaches the remaining quota:
\begin{equation}
  b_j=\min\{b:H_j(b)\ge k_j\}.
  \label{eq:method:digit-boundary}
\end{equation}
Equivalently, $H_j(b_j-1)<k_j\le H_j(b_j)$, with $H_j(-1)=0$.  Thus every digit group before
$b_j$ is part of the result, while the rank boundary lies inside $b_j$.  The newly committed group is
\begin{equation}
  \mathcal{R}_j
  =\{i\in\mathcal{Q}_j:
    \operatorname{dig}_j(\kappa_{32}^{\downarrow}(x_i))<b_j\}.
  \label{eq:method:committed-group}
\end{equation}
Every index in $\mathcal{R}_j$ is irrevocably selected, every group after $b_j$ is discarded, and
only the boundary group remains unresolved:
\begin{align}
  \mathcal{E}_{j+1}
  &=\mathcal{E}_j\uplus\mathcal{R}_j,
  &
  \mathcal{Q}_{j+1}
  &=\{i\in\mathcal{Q}_j:
    \operatorname{dig}_j(\kappa_{32}^{\downarrow}(x_i))=b_j\},
  &
  k_{j+1}
  &=k_j-|\mathcal{R}_j|.
  \label{eq:method:refine-recurrence}
\end{align}
While classifying $\mathcal{Q}_j$, the algorithm retains $\mathcal{Q}_{j+1}$ and builds its
next-digit histogram.  The next round therefore inspects only the unique boundary group rather than
the complete admitted tail or score row, following a prefix--commit--retain pattern at every digit.
\Cref{sec:impl:refinement} describes the physical buffering of these frontiers.

After all $D$ digits, candidates in $\mathcal Q_D$ have identical FP32 ordered keys. Phase~3
fills the remaining $k_D$ slots from that group. This returns exactly $K$ indices while leaving
the choice among equal-valued boundary positions unspecified under the operator's tie semantics.

\begin{theorem}[End-to-End Exactness]
\label{thm:method:exactness}
For every admissible row with $L>K$, the algorithm returns $K$ distinct valid indices satisfying
the exact Top-$K$ contract of \cref{sec:bg:interface}.
\end{theorem}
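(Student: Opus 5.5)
The proof proceeds in two stages. First we show that every route into Phase~3 delivers an initialization $(\mathcal E_0,\mathcal Q_0,k_0,h_0)$ satisfying the certified invariant \cref{eq:method:certified-invariant}. Second we show that Phase~3 preserves a stronger loop invariant, from which the output contract follows.

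\textbf{Initialization.} There are three ways to reach Phase~3.

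\emph{Sampled primary route.} The route is taken only when the complete-row count gives $C(\tau_0)\ge K$. The ordering half of the invariant is exactly the argument in the proof of \cref{lem:method:containment}, using the monotonicity \cref{eq:method:coarse-order}. We also need that comparing FP32 scores against the converted cutoff yields exactly $\mathcal A(\tau_0)$. This holds because $c$ is monotone, so the admitted set is an upper set in FP32 order. The case $C(\tau_0)=K$ returns $\mathcal A(\tau_0)$ directly and is trivially exact.

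\emph{Band extension.} The band $\Delta\mathcal A$ is disjoint from $\mathcal A(\tau_0)$, and $\tau_1\ge\tau_0$ gives $\mathcal A(\tau_0)\uplus\Delta\mathcal A=\mathcal A(\tau_1)$. Hence \cref{eq:method:rescue-hist} is the leading-digit histogram of $\mathcal A(\tau_1)$. The same lemma then applies, since this route is entered only when $C(\tau_1)\ge K$.

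\emph{Exact coarse recovery.} The route is always applicable because $C(B_c-1)=L>K$, so $\tau_\star$ exists. Its minimality gives $|\mathcal E_0|=C(\tau_\star-1)<K$, so $k_0\ge1$. The definition of $\tau_\star$ gives $|\mathcal S_0|=C(\tau_\star)\ge K$. Ordering again follows from monotonicity, both between $\mathcal S_0$ and its complement and between $\mathcal E_0$ and $\mathcal Q_0$.

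In every case we record the stronger statement
\[
1\le k_0\le|\mathcal Q_0|,\qquad x_e\ge x_q\ge x_z \quad\text{for all } e\in\mathcal E_0,\ q\in\mathcal Q_0,\ z\notin\mathcal S_0 .
\]

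\textbf{Refinement invariant.} For each round $j$ we maintain:
\begin{itemize}
\item the sets $\mathcal E_j$, $\mathcal Q_j$ and the discarded indices partition $[L]$;
\item $|\mathcal E_j|+k_j=K$ and $1\le k_j\le|\mathcal Q_j|$;
\item every element of $\mathcal E_j$ is $\ge$ every element of $\mathcal Q_j$, which is $\ge$ every discarded element;
\item all elements of $\mathcal Q_j$ share the top $j$ digits of $\kappa_{32}^{\downarrow}$.
\end{itemize}

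To step from round $j$ to $j+1$, the choice of $b_j$ gives $H_j(b_j-1)<k_j\le H_j(b_j)$. Therefore $|\mathcal R_j|<k_j$, which yields $k_{j+1}\ge1$, and $k_{j+1}\le|\mathcal Q_{j+1}|$. Within $\mathcal Q_j$ the shared higher digits make the lexicographic order of the remaining digits coincide with the key order. Since $\kappa_{32}^{\downarrow}$ is strictly order-reversing on non-NaN floats, a group with smaller digit $j$ has values no smaller than those of larger groups. So committing $\mathcal R_j$ and discarding groups after $b_j$ preserves the sandwich ordering.

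\textbf{Termination.} After $D$ digits, $\mathcal Q_D$ consists of equal keys and hence equal values. Here we must use that $\kappa_{32}$ is injective on values, with the caveat that $\pm0$ compare equal numerically; we would check that the sortable key either maps them together or that splitting them is harmless because they compare equal anyway. Filling $k_D\le|\mathcal Q_D|$ slots from $\mathcal Q_D$ gives exactly $K$ distinct valid indices, because all sets are disjoint subsets of $[L]$. No unselected valid element exceeds a selected one: unselected elements lie either in $\mathcal Q_D$ (equal values) or among the discarded indices, and the sandwich ordering covers the latter.

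\textbf{Main obstacle.} The main obstacle is exactness of the FP32 cutoff conversion in Phase~2, including the round-to-nearest FP16 projection at bin edges, infinities, subnormals and signed zeros. Concretely, we must show that
\[
\{i : x_i\ge \text{cutoff}\}=\mathcal A(\tau_0).
\]
We would argue this purely from the monotonicity of $c$, without computing the cutoff explicitly: the preimage of $\{0,\dots,\tau\}$ under a monotone map is an upper interval, and its infimum over FP32 values is attained. This equality holds provided the implementation defines the cutoff as that minimum.
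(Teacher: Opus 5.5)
Your proposal is correct and follows the same route as the paper's proof sketch: certify $\mathcal S_0$ via \cref{lem:method:containment} or the exact coarse route, then let the prefix--commit--retain recurrence preserve enough frontier to fill the quota down to the final equal-key group. You also supply details the paper leaves implicit, notably the loop invariant $1\le k_j\le|\mathcal Q_j|$, the existence of $\tau_\star$, and the sandwich ordering. The cutoff-conversion step you flag as the main obstacle is not needed for exactness, and it is delicate at $\pm0$ if $c$ is computed from bit patterns: the certificate counts the comparison-defined set $\{i:x_i\ge t\}$ itself, which is an upper set in value order for any cutoff $t$, so containment holds for it directly and its coincidence with $\mathcal A(\tau_0)$ only affects efficiency.
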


\begin{proof}[Proof sketch]
Phase~2 either certifies a sampled admitted tail by \cref{lem:method:containment} or constructs a
sufficient committed-prefix--plus--frontier state with the exact coarse route.
Equations~\eqref{eq:method:digit-boundary}--\eqref{eq:method:refine-recurrence} then commit every
strictly better group while preserving enough indices in the unique boundary frontier to fill the remaining
quota.  The $D=3$ digits exhaust the FP32 ordered key; selecting the remaining indices from the final
equal-key group produces $K$ distinct valid indices, completing the exact Top-$K$ output.
\end{proof}

The proof isolates the role of reduced precision: the FP16 key bounds candidate membership, but
it never resolves the final order.  Exactness is supplied by complete-row certification, exact
FP32 refinement, and the sample-independent recovery path.

\subsection{Execution Mappings}
\label{sec:method:mapping}

For the sampled path, the three logical phases admit two GPU mappings, chosen according to the
row-level parallelism exposed by the call.  Both appear in the lower half of
\cref{fig:method:overview} and preserve the same certification and refinement invariants.

\paragraph{Persistent one CTA per row.}
When the call contains enough rows, one CTA executes Phases~1--3 for a row and retains
phase-local histograms, candidate indices, and the selected prefix across the pipeline.  Rows
remain mutually independent, so a persistent pool can take them in completion order and absorb
variation in valid length and candidate volume.  This mapping minimizes coordination: the only
complete-row work is the Phase~2 scan on a certified sample path, while every exact radix round
stays local to the CTA's frontier state.

\paragraph{KV-split mapping.}
For a small number of long rows, row-level parallelism leaves most of the GPU idle during the
mandatory scan.  We partition $[L]$ into $P$ disjoint contiguous segments
$\Omega_0,\ldots,\Omega_{P-1}$.
The deterministic view lets every participant reconstruct the same row-local boundary, after which
the CTAs partition Phase~2 across disjoint segments, producing
\begin{equation}
  \mathcal{A}_p(\tau)=\mathcal{A}(\tau)\cap\Omega_p,
  \quad C_p(\tau)=|\mathcal{A}_p(\tau)|,
  \quad \widehat h_{0,p}(b;\tau)
  =\sum_{i\in\mathcal{A}_p(\tau)}
    \mathbf{1}[\operatorname{dig}_0(\kappa_{32}^{\downarrow}(x_i))=b].
  \label{eq:method:split-local}
\end{equation}
Because the segments do not overlap, their outputs compose exactly:
\begin{equation}
  \mathcal{A}(\tau)=\biguplus_{p=0}^{P-1}\mathcal{A}_p(\tau),
  \qquad C(\tau)=\sum_{p=0}^{P-1}C_p(\tau),
  \qquad \widehat h_0(b;\tau)=\sum_{p=0}^{P-1}\widehat h_{0,p}(b;\tau).
  \label{eq:method:split-reduce}
\end{equation}
After all segments publish their local state, one finisher reduces
\cref{eq:method:split-reduce}, compacts the disjoint candidate ranges, certifies the aggregate, and
initializes the same $(\mathcal E_0,\mathcal Q_0,k_0,h_0)$ interface used by the one-CTA path.  An
underfilled aggregate enters the same recovery logic.  KV-split therefore changes the parallel
span of the complete-row validation, not the admitted tail or the selection semantics.
The next subsection quantifies the row-level work and reduced validation span introduced by
this mapping.

\subsection{Complexity Analysis}
\label{sec:method:complexity}

On the certified sampled path, Phase~1 reads approximately $L/s$ scores, Phase~2 reads all $L$
scores once, and Phase~3 reads only the successively shrinking boundary frontiers.  Per row, the
work is therefore
\begin{equation}
  \Theta\!\left(\frac{L}{s}+L+
    \sum_{j=0}^{D-1}|\mathcal Q_j|+K\right),
  \qquad D=3.
  \label{eq:method:fast-cost}
\end{equation}
The first two terms describe fixed-coverage passes over the input row: a regular $1/s$ view
costing $L/s$ and one complete certification-and-candidate pass costing $L$.  The frontier sum
accounts for exact work over successively retained sets and therefore depends on the realized
boundary, while the final $K$ term writes the output.  Exactness still requires the complete
$L$-value pass.

\begin{table}[!htbp]
  \centering
  \caption{Score-row traffic before frontier-only FP32 refinement.  A band attempt and the exact
    coarse recovery each add only after the preceding candidate set underfills; all routes terminate
    with the same certified-state invariant.}
  \label{tab:method:cost}
  \small
  \setlength{\tabcolsep}{6pt}
  \renewcommand{\arraystretch}{1.18}
  \begin{tabular}{@{}lcl@{}}
    \toprule
    \textbf{Route} & \textbf{Score reads} & \textbf{Initial FP32 frontier} \\
    \midrule
    Certified primary boundary & $(1+1/s)L$ & admitted tail $\mathcal{A}(\tau_0)$ \\
    Certified after band extension & $(2+1/s)L$ & expanded tail $\mathcal{A}(\tau_1)$ \\
    Exact recovery, no band attempted & $(3+1/s)L$ & boundary bin; better bins committed \\
    Exact recovery after failed band & $(4+1/s)L$ & boundary bin; better bins committed \\
    Direct-exact route & $2L$ & boundary bin; better bins committed \\
    \bottomrule
  \end{tabular}
\end{table}

Before frontier-only refinement, the certified path contributes $R_{\mathrm{eq}}=1+1/s$,
replacing the second row-scale traversal identified in \cref{sec:bg:cuda-path} with a fractional
view.

Under the row-scaled traffic normalization of
\cref{sec:bg:interface}, let $p_{\mathrm{band}}$ be the probability of attempting the secondary band
and $p_{\mathrm{exact}}$ the probability of ultimately entering exact coarse recovery.  For the
persistent one-CTA sampled route,
\begin{equation}
  \mathbb{E}[R_{\mathrm{eq}}]
  =1+\frac{1}{s}+p_{\mathrm{band}}+2p_{\mathrm{exact}},
  \label{eq:method:expected-passes}
\end{equation}
while the frontier term $\sum_j|\mathcal Q_j|$ remains explicit in the complete work of
\cref{eq:method:fast-cost}.  Observed behavior on the target workload bears out this amortization:
primary underfill is rare, and the secondary boundary makes exact recovery rarer still.
\Cref{sec:eval:ablation} reports the measured $p_{\mathrm{band}}$ and
$p_{\mathrm{exact}}$.  Rows certified by the primary boundary incur neither the
secondary scan nor the exact-recovery passes.

KV-split exchanges additional aggregate sample work for a shorter validation span.  Each of the
$P$ segment CTAs reconstructs the same $L/s$ regular view, so the primary row-scaled traffic is
$R_{\mathrm{eq}}=1+P/s$; their disjoint Phase~2 segments still cover the row exactly once and
reduce the critical validation span from $L$ to approximately $L/P$.  For bounded $P$, the
per-row asymptotic work remains that of \cref{eq:method:fast-cost}, followed by a compact
histogram reduction and the same frontier work on the finisher.  The mapping therefore targets
occupancy at small row counts rather than reducing aggregate work.

Finally, the correctness argument is rank-general.  For a different $K$, the coarse target
$\widetilde K$, its sample rank, candidate budget, and the choice between sampled and
direct-exact routes must be recalibrated for efficiency; \cref{lem:method:containment}, the
full-row certificate, and the exact refinement recurrence do not change.  The implementation
constants in \cref{sec:impl:integration} specialize this general construction to the ranks and
row shapes of \cref{sec:bg:interface}. \Cref{sec:impl} also develops the persistent task
schedule, finisher election, storage lifetimes, and shape-stable dispatch that realize both
execution mappings.

\section{Implementation}
\label{sec:impl}

\subsection{Implementation Overview}
\label{sec:impl:overview}

HPC-Ops realizes the three phases of \cref{sec:method} as a family of exact GPU kernels.  Its
portfolio centers on the sampled long-row pipeline, realized through persistent and KV-split
execution, while row-local and eight-CTA direct-exact kernels cover complementary shape regimes.

These kernels differ in how row work and intermediate state are owned.  Persistent execution
keeps an entire row under one CTA; KV-split execution partitions its complete-row work and
combines compact segment state.  Row-local exact execution constructs exact coarse state within
one wider CTA, whereas the eight-CTA cluster constructs that state cooperatively.  Every path
then enters the same FP32 refinement interface, using a common ordered-key layout,
commit--retain recurrence, and output procedure.  A shape-stable planner selects the mapping
from captured dimensions.

The implementation discussion follows the sampled path through three CUDA concerns.
\Cref{sec:impl:localization} maps regular-view access and coarse localization onto registers and
reusable shared state; \cref{sec:impl:validation} develops the vectorized complete-row path, fused
candidate storage, and recovery traffic; and \cref{sec:impl:refinement} follows the compact
frontier's storage lifetime across exact digits.  \Cref{sec:impl:mappings} then specifies CTA
ownership, coordination, and dispatch across captured shapes.
\subsection{Phase 1: On-Chip Boundary Localization}
\label{sec:impl:localization}

The sampled kernels realize the row-phased regular view of \cref{sec:method:estimate} with
standard and wide-row policies that use different constant strides.  For row $r$, a CTA derives
offset $a_r$ from the row identity and streams the corresponding view $\mathcal V_{r,s}$ without
lookup state. This row-dependent phase distributes consecutive rows across sampling offsets
while preserving regular address generation within each row.  The resulting prefix locates the
primary boundary $\tau_0$ at the rank specified in \cref{sec:method:estimate}; the wide-row
policy also retains the deeper recovery boundary $\tau_1$ from the same prefix.  Under KV-split,
every segment CTA reconstructs the same row-local view from $a_r$, so boundary state need not be
produced by a designated CTA or broadcast across the group.

Each sampled score moves directly from global memory into a register, where it is projected once
to the 11-bit coarse key $c(x_i)$.  Shared-memory atomic increments accumulate these keys in the
2,048-bin histogram $h_{r,s}$, and an in-place CTA prefix produces $H_{r,s}$ and locates
$\tau_0$ and, when enabled, $\tau_1$.  A designated thread converts the selected bins to their
inclusive FP32 endpoints and publishes the resulting shared scalars.  The same histogram
allocation is then cleared for the leading exact-digit histogram in Phase~2.  The row remains in
global memory between the sampled accesses and the complete-row pass: registers hold the sampled
operands in flight, while shared memory retains only the histogram, prefix state, and endpoints.
The first column of \cref{fig:impl:dataflow} exposes this residency and reuse.

A view is usable only when it reaches its largest requested rank: $|\mathcal V_{r,s}|\ge q$ for a
primary-only policy and $|\mathcal V_{r,s}|\ge q_1$ when the recovery boundary is enabled.  Otherwise,
the CTA uses the same coarse-key representation, shared histogram, and in-place prefix to localize an
exact coarse boundary from all $L_r$ valid scores.  The sampled path passes one or two proposal
endpoints to full-row validation; the exact route passes the complete-row coarse boundary and its
prefix state to the classifier, which emits bins ahead of the boundary and materializes only its
boundary-bin frontier.  Both routes expose compact boundary state to the common FP32 refinement
interface.
\definecolor{HYPhaseOne}{HTML}{356FD3}
\definecolor{HYPhaseTwo}{HTML}{7457C7}
\definecolor{HYPhaseThree}{HTML}{16877F}
\definecolor{HYGlobalMem}{HTML}{3978B9}
\definecolor{HYGlobalMemLight}{HTML}{EDF5FC}
\definecolor{HYRegisterMem}{HTML}{D27927}
\definecolor{HYRegisterMemLight}{HTML}{FFF4E8}
\definecolor{HYSharedMem}{HTML}{2F8B5E}
\definecolor{HYSharedMemLight}{HTML}{EEF8F2}

\begin{figure}[!htbp]
  \centering
  \def\globalLineStrut{\vphantom{$\mathbf x=(x_0,\ldots,x_{L-1})$}}
  \def\registerLineStrut{\vphantom{$\kappa_{32}^{\downarrow}(x_i)$}}
  \begin{tikzpicture}[
    scale=0.96,
    transform shape,
    font=\footnotesize,
    scope/.style={rounded corners=2.5pt,draw=HYDarkBlue!46,
      fill=HYLightBlue!17,text=HYDarkBlue,align=center,
      inner xsep=4pt,inner ysep=2pt,line width=0.68pt,
      font=\fontsize{8.2}{8.9}\selectfont\bfseries},
    phase/.style={rounded corners=2.5pt,minimum height=0.64cm,
      align=center,inner sep=3pt,line width=0.76pt,
      text height=1.65ex,text depth=0.35ex,
      font=\fontsize{8.0}{8.8}\selectfont\bfseries},
    phaseOne/.style={phase,draw=HYPhaseOne!82,fill=HYPhaseOne!10,
      text=HYPhaseOne!92!black},
    phaseTwo/.style={phase,draw=HYPhaseTwo!82,fill=HYPhaseTwo!9,
      text=HYPhaseTwo!92!black},
    phaseThree/.style={phase,draw=HYPhaseThree!82,fill=HYPhaseThree!9,
      text=HYPhaseThree!92!black},
    lane/.style={rounded corners=3pt,line width=0.58pt},
    laneLabel/.style={rounded corners=2pt,align=center,inner sep=2pt,
      minimum width=1.18cm,font=\fontsize{7.0}{7.6}\selectfont\bfseries,
      line width=0.68pt},
    cell/.style={rounded corners=2.5pt,align=center,inner sep=3.2pt,
      line width=0.72pt,font=\fontsize{7.8}{8.6}\selectfont},
    gcell/.style={cell,draw=HYGlobalMem!82,fill=HYGlobalMemLight,
      text=HYGlobalMem!82!black},
    rcell/.style={cell,draw=HYRegisterMem!86,fill=HYRegisterMemLight,
      text=HYRegisterMem!86!black},
    scell/.style={cell,draw=HYSharedMem!84,fill=HYSharedMemLight,
      text=HYSharedMem!82!black},
    flow/.style={-{Latex[length=1.8mm]},line width=0.76pt,
      shorten <=1.2pt,shorten >=1.4pt},
    pOneFlow/.style={flow,draw=HYPhaseOne!82},
    pTwoFlow/.style={flow,draw=HYPhaseTwo!84},
    pThreeFlow/.style={flow,draw=HYPhaseThree!84},
    stateFlow/.style={-{Latex[length=1.7mm]},line width=0.72pt,
      draw=HYSharedMem!76!black,shorten <=1.3pt,shorten >=1.5pt},
    endpointFlow/.style={-{Latex[length=1.7mm]},dashed,line width=0.76pt,
      draw=HYRegisterMem!88!black,shorten <=1.4pt,shorten >=1.6pt},
    spillFlow/.style={-{Latex[length=1.7mm]},dashed,line width=0.72pt,
      draw=BrickRed!78,shorten <=1.4pt,shorten >=1.6pt},
    arrowText/.style={font=\fontsize{7.0}{7.6}\selectfont\ttfamily,
      inner sep=0.5pt,align=center},
    ledger/.style={rounded corners=2pt,minimum height=0.63cm,
      align=center,inner sep=2.5pt,line width=0.62pt,
      font=\fontsize{7.1}{7.8}\selectfont}
  ]
    \node[scope,text width=13.46cm,minimum height=0.44cm] at (8.15,7.18)
      {PERSISTENT SAMPLED EXECUTION};

    \node[phaseOne,text width=3.28cm] (ph1) at (3.20,6.45)
      {PHASE 1\enspace Boundary};
    \node[phaseTwo,text width=4.58cm] (ph2) at (7.70,6.45)
      {PHASE 2\enspace Validation};
    \node[phaseThree,text width=4.00cm] (ph3) at (12.66,6.45)
      {PHASE 3\enspace Refinement};

    \node[lane,draw=HYGlobalMem!25,fill=HYGlobalMemLight!55,
      minimum width=13.56cm,minimum height=1.17cm] at (8.15,5.26) {};
    \node[laneLabel,draw=HYGlobalMem!76,fill=HYGlobalMemLight,
      text=HYGlobalMem!84!black] at (0.66,5.26) {GLOBAL\\MEMORY};

    \node[lane,draw=HYRegisterMem!28,fill=HYRegisterMemLight!58,
      minimum width=13.56cm,minimum height=1.10cm] at (8.15,3.75) {};
    \node[laneLabel,draw=HYRegisterMem!80,fill=HYRegisterMemLight,
      text=HYRegisterMem!88!black] at (0.66,3.75) {REGISTERS};

    \node[lane,draw=HYSharedMem!27,fill=HYSharedMemLight!58,
      minimum width=13.56cm,minimum height=1.48cm] at (8.15,2.02) {};
    \node[laneLabel,draw=HYSharedMem!78,fill=HYSharedMemLight,
      text=HYSharedMem!86!black] at (0.66,2.02) {SHARED\\MEMORY};

    \node[gcell,text width=3.05cm,minimum height=0.96cm] (g1) at (3.20,5.26)
      {\globalLineStrut\textbf{FP32 score row}\\
       \globalLineStrut$\mathbf x=(x_0,\ldots,x_{L-1})$};
    \node[gcell,text width=4.25cm,minimum height=0.96cm] (g2) at (7.70,5.26)
      {\globalLineStrut\textbf{FP32 score row $\mathbf x$}\\
       \globalLineStrut candidate spill buffer};
    \node[gcell,text width=3.55cm,minimum height=0.96cm] (g3) at (12.66,5.26)
      {\globalLineStrut\textbf{FP32 score row $\mathbf x$}\\
       \globalLineStrut ping-pong spill; output $[0{:}K)$};

    \node[rcell,text width=3.05cm,minimum height=0.96cm] (r1) at (3.20,3.75)
      {\registerLineStrut sampled score $x_i$\\
       \registerLineStrut coarse key $c(x_i)$};
    \node[rcell,text width=4.25cm,minimum height=0.96cm] (r2) at (7.70,3.75)
      {\registerLineStrut current/next \texttt{float4}\\
       \registerLineStrut FP32 cutoff predicate};
    \node[rcell,text width=3.55cm,minimum height=0.96cm] (r3) at (12.66,3.75)
      {\registerLineStrut frontier item $(i,x_i)$\\
       \registerLineStrut select / drop / retain by digit};

    \node[scell,text width=3.05cm,minimum height=1.27cm] (s1) at (3.20,2.02)
      {coarse histogram $h_{r,s}$\\prefix $H_{r,s}$\\boundary bins $\tau_0$; optional $\tau_1$};
    \node[scell,text width=4.25cm,minimum height=1.27cm] (s2) at (7.70,2.02)
      {admitted indices $\mathcal A(\tau)$\\count $C(\tau)$; proposal $\widehat h_0$};
    \node[scell,text width=3.55cm,minimum height=1.27cm] (s3) at (12.66,2.02)
      {histogram $h_j\rightarrow h_{j+1}$\\frontier $\mathcal Q_j\rightarrow\mathcal Q_{j+1}$\\quota $k_j\rightarrow k_{j+1}$};

    \draw[pOneFlow] (g1.south) --
      node[midway,right=2pt,arrowText,text=HYPhaseOne!88!black]
      {strided sampled load} (r1.north);
    \draw[pOneFlow] (r1.south) --
      node[midway,right=2pt,arrowText,text=HYPhaseOne!88!black]
      {histogram \texttt{atomicAdd}} (s1.north);

    \draw[pTwoFlow] ([xshift=-5mm]g2.south) --
      node[midway,right=2pt,arrowText,text=HYPhaseTwo!90!black]
      {128-bit \texttt{ld.global.cg}} ([xshift=-5mm]r2.north);
    \draw[pTwoFlow] ([xshift=-5mm]r2.south) --
      node[midway,right=2pt,arrowText,text=HYPhaseTwo!90!black]
      {candidate append\\count/histogram \texttt{atomicAdd}}
      ([xshift=-5mm]s2.north);

    \draw[pThreeFlow] (g3.south) --
      node[midway,right=2pt,arrowText,text=HYPhaseThree!90!black]
      {gather $i\in\mathcal Q_j$} (r3.north);
    \draw[pThreeFlow] (r3.south) --
      node[midway,right=2pt,arrowText,text=HYPhaseThree!90!black]
      {frontier append\\histogram \texttt{atomicAdd}} (s3.north);

    \draw[endpointFlow] ([yshift=2mm]s1.east) to[out=24,in=198]
      node[pos=0.42,right=3pt,yshift=2.5pt,arrowText,text=HYRegisterMem!90!black]
      {FP32 cutoff} ([yshift=-2mm]r2.west);
    \draw[stateFlow] (s1.east) --
      node[midway,above=2pt,arrowText,text=HYSharedMem!82!black]
      {reuse} (s2.west);
    \draw[stateFlow] (s2.east) --
      node[midway,above=2pt,arrowText,text=HYSharedMem!82!black]
      {$\mathcal Q_0,h_0$} (s3.west);
    \draw[spillFlow] (s3.east)
      .. controls +(0.20,0.76) and +(0.20,-0.76) ..
      node[midway,right=1pt,arrowText,text=BrickRed!82!black]
      {overflow} (g3.east);

    \node[ledger,draw=HYPhaseOne!46,fill=HYPhaseOne!5,
      text=HYPhaseOne!88!black,text width=6.50cm] at (4.70,0.72)
      {\textbf{SCORE-ROW READS BY PHASE}\\
       {\color{black}\textbf{PHASE 1}}: $L/s$\hspace{0.65em}
       {\color{black}\textbf{PHASE 2}}: $L$\hspace{0.65em}
       {\color{black}\textbf{PHASE 3}}: $\sum_{j=0}^{D-1}|\mathcal Q_j|$};
    \node[ledger,draw=HYSharedMem!48,fill=HYSharedMem!5,
      text=HYSharedMem!84!black,text width=6.27cm] at (11.58,0.72)
      {\textbf{REUSED 2,048-BIN HISTOGRAM}\\
       same shared allocation: $h_{r,s}\rightarrow\widehat h_0/h_0\rightarrow h_1\rightarrow h_2$};
  \end{tikzpicture}
  \caption{GPU state residency and memory operations in persistent sampled execution.
    Columns trace boundary construction, full-row validation, and frontier refinement; rows separate
    global-memory, register, and shared-memory state.  Vertical arrows identify load, append, and
    atomic-update paths, whereas horizontal arrows pass the FP32 cutoff and certified frontier while
    recycling the shared histogram allocation.  Phase~2 combines 128-bit cache-global loads with
    candidate and leading-histogram construction.  Phase~3 gathers only unresolved frontier indices,
    with surviving overflow continuing through global ping-pong buffers.  The lower ledgers summarize
    per-phase score reads and the lifetime of the reused 2,048-bin histogram.}
  \label{fig:impl:dataflow}
\end{figure}

\FloatBarrier

\subsection{Phase 2: Fused Full-Row Validation}
\label{sec:impl:validation}

Phase~1 hands validation the inclusive FP32 endpoint represented by the selected coarse bin.
The CTA broadcasts its value through shared memory and retains both the value and its bit
pattern while streaming the row.  Comparing each original FP32 score directly with this endpoint
reproduces the selected coarse prefix, including its boundary value.  Coarse-key projection is
therefore confined to boundary construction and exact coarse recovery rather than repeated in
the bandwidth-dominant complete-row pass.

Aligned rows are fetched with 128-bit cache-global loads (\texttt{ld.global.cg.v4.f32});
two-wide and scalar paths cover less aligned layouts and the valid-row tail.  In the wide-row
single-CTA specialization, every thread keeps the current pair of \texttt{float4} vectors in
registers while issuing the next pair.  Classification and shared-memory updates consume the
current values as the following loads are in flight, overlapping memory latency and address
generation with the collective work.

The loop fuses endpoint classification, candidate formation, and construction of the leading
exact-digit histogram.  For every admitted score, an \texttt{atomicAdd} on the admitted count
reserves its candidate position; the same encounter stores the original row index and
contributes the score's leading ordered-key digit to the shared histogram.  Indices first occupy
the CTA-resident candidate array and continue into its global overflow slice when that array
fills.  Candidate storage, the realized count, and the leading histogram consequently describe
the same tail and proceed to refinement without a separate candidate or histogram pass.

After the traversal, the admitted count selects the continuation.  A sufficient primary tail reuses
the candidate workspace and histogram already produced by the fused loop.  On underfill, the
wide-row policy repeats the vectorized traversal but appends only the value band newly admitted by
$\tau_1$; continued underfill enters exact coarse localization.  The principal path therefore incurs
one complete-row traversal, while additional row traffic is confined to recovery.
\subsection{Phase 3: Frontier-Only Radix Refinement}
\label{sec:impl:refinement}

Phase~2 leaves the initial frontier $\mathcal Q_0$ and its leading exact-digit histogram $h_0$
in CTA-owned state.  The $11{+}11{+}10$ ordered-key partition lets every refinement round reuse
one 2,048-bin shared allocation.  A round prefixes the current histogram in place and identifies
the boundary digit.  Once that prefix has been consumed, the CTA clears the allocation and
repopulates it with the next histogram during classification.  The same storage thus advances
from $h_j$ to $h_{j+1}$ without rebuilding histogram state from the complete row.

Two shared arrays of 32-bit row indices physically represent the current and next frontiers.  In
round $j$, threads gather scores only for $i\in\mathcal Q_j$, form the required ordered-key
digit in registers, and route each index to the output, the discarded suffix, or the surviving
boundary group.  An atomic reservation appends a surviving index to the opposite array while
shared-memory increments construct its next-digit histogram; the two arrays then exchange roles.
Consequently, an FP32 score is gathered again only while its index remains unresolved.
Direct-exact mappings can instead retain compact $(\text{index},\text{ordered key})$ pairs on
chip and refine their resident keys without another score load.

Two global ping-pong slices extend the shared frontier when its resident capacity is exceeded.
Resident and spilled entries pass through the same digit classifier, and only surviving overflow is
written to the opposite slice.  Under persistent execution, each worker CTA owns these slices and
reuses them across the rows it later claims, so the spill workspace scales with the bounded worker
pool rather than the batch size.  Cooperative mappings receive segment- or row-owned slices from the
same precomputed workspace contract.  The global continuation therefore changes state residency,
not the refinement procedure.
\subsection{Execution Mappings and Dispatch}
\label{sec:impl:mappings}

HPC-Ops maps each captured shape to an execution organization that balances grid-level occupancy
against inter-CTA coordination.  Its sampled mappings cover long rows through either persistent
row ownership or partitioned validation, while row-local and eight-CTA direct-exact kernels
cover complementary shape regimes.  \Cref{fig:impl:portfolio} organizes these mappings by
captured row capacity $M$ and allocated row width $N$, and exposes their CTA ownership.  Within
the sampled long-row regime, abundant rows favor persistent execution, whereas lower row-level
concurrency favors KV-split.  For direct-exact execution, a grid of wider per-row CTAs avoids
peer coordination when it can sustain occupancy; only a few long rows instead favor an eight-CTA
cluster.  Inter-CTA cooperation is therefore introduced where the work per captured row can
amortize it.

\begin{figure}[!htbp]
  \centering
  \begin{tikzpicture}[
    scale=0.96,
    transform shape,
    font=\footnotesize,
    map/.style={rounded corners=3pt,draw=black!28,fill=black!1,
      line width=0.65pt},
    region/.style={rounded corners=2.5pt,align=center,inner sep=3pt,
      font=\scriptsize\bfseries,line width=0.72pt},
    persistent/.style={region,draw=HYDarkBlue!68,fill=HYLightBlue!28,
      text=HYDarkBlue},
    kvsplit/.style={region,draw=cyan!55!blue,fill=cyan!8,
      text=cyan!35!blue},
    rowlocal/.style={region,draw=orange!78!black,fill=orange!9,
      text=orange!75!black},
    cluster/.style={region,draw=ForestGreen!64!black,fill=ForestGreen!7,
      text=ForestGreen!58!black},
    card/.style={rounded corners=3pt,draw=black!25,fill=black!1,
      minimum height=1.64cm,inner sep=4pt,line width=0.62pt},
    cta/.style={rounded corners=1.2pt,minimum width=0.34cm,
      minimum height=0.30cm,inner sep=1pt,line width=0.55pt},
    pcta/.style={cta,draw=HYDarkBlue!72,fill=HYLightBlue!42},
    kcta/.style={cta,draw=cyan!55!blue,fill=cyan!15},
    rcta/.style={cta,draw=orange!78!black,fill=orange!17},
    ccta/.style={cta,draw=ForestGreen!64!black,fill=ForestGreen!12}
  ]
    \node[map,minimum width=5.05cm,minimum height=4.60cm] (mapbox) at (2.72,2.62) {};
    \node[font=\footnotesize\bfseries,text=HYDarkBlue] at (2.72,4.65)
      {Static execution map};
    \node[font=\footnotesize\bfseries,text=HYDarkBlue] at (10.38,4.65)
      {CTA organization within each mapping};
    \draw[-{Latex[length=1.8mm]},line width=0.68pt,draw=black!58]
      (0.82,0.91) -- (4.92,0.91);
    \draw[-{Latex[length=1.8mm]},line width=0.68pt,draw=black!58]
      (0.82,0.91) -- (0.82,4.22);
    \node[font=\scriptsize,text=black!66] at (2.88,0.53)
      {larger captured row width $N$};
    \node[font=\scriptsize,text=black!66,rotate=90] at (0.40,2.67)
      {more captured rows $M$};
    \node[rowlocal,text width=1.48cm,minimum height=2.72cm]
      at (1.72,2.63) {Row-local\\exact CTA\\per row};
    \node[persistent,text width=2.02cm,minimum height=0.92cm,inner xsep=1.5pt,
      font=\fontsize{6.8}{7.6}\selectfont\bfseries]
      at (3.78,3.76) {Persistent sampled\\one CTA\\per row};
    \node[kvsplit,text width=2.02cm,minimum height=0.92cm,inner xsep=1.5pt,
      font=\fontsize{6.8}{7.6}\selectfont\bfseries]
      at (3.78,2.67) {Sampled KV-split\\$2/4/8$ segment\\CTAs};
    \node[cluster,text width=2.02cm,minimum height=0.92cm,inner xsep=1.5pt,
      font=\fontsize{6.8}{7.6}\selectfont\bfseries]
      at (3.78,1.58) {Eight-CTA\\exact cluster\\per row};

    \node[card,text width=4.10cm] (pcard) at (8.04,3.59) {};
    \node[font=\fontsize{8.0}{9.0}\selectfont\bfseries,text=HYDarkBlue]
      at (8.04,4.17) {Persistent sampled};
    \foreach \y in {3.24,3.55,3.86} {
      \draw[line width=0.58pt,draw=HYDarkBlue!45] (7.05,\y) -- (9.38,\y);
      \node[pcta] at (6.70,\y) {};
    }
    \node[font=\fontsize{7.2}{8.0}\selectfont,text=black!62] at (8.04,2.95)
      {one owner per row};

    \node[card,text width=4.10cm] (kcard) at (12.70,3.59) {};
    \node[font=\fontsize{8.0}{9.0}\selectfont\bfseries,text=cyan!35!blue]
      at (12.70,4.17) {Sampled KV-split};
    \draw[line width=0.70pt,draw=cyan!45!blue] (11.30,3.55) -- (14.10,3.55);
    \foreach \x in {11.50,12.30,13.10} {
      \node[kcta] at (\x,3.55) {};
    }
    \node[kcta,fill=cyan!24,font=\fontsize{6.6}{7.0}\selectfont\bfseries,
      text=cyan!35!blue] at (13.90,3.55) {$F$};
    \node[font=\fontsize{7.2}{8.0}\selectfont,text=black!62] at (12.70,2.95)
      {$2/4/8$ segment CTAs \;\textbullet\; $F$ merges};

    \node[card,text width=4.10cm] (rcard) at (8.04,1.57) {};
    \node[font=\fontsize{8.0}{9.0}\selectfont\bfseries,text=orange!75!black]
      at (8.04,2.15) {Row-local exact};
    \draw[line width=0.66pt,draw=orange!72!black] (6.55,1.58) -- (9.54,1.58);
    \node[rcta,minimum width=2.38cm,minimum height=0.44cm,
      font=\fontsize{7.2}{8.0}\selectfont\bfseries,text=orange!75!black]
      at (8.04,1.58) {CTA};
    \node[font=\fontsize{7.2}{8.0}\selectfont,text=black!62] at (8.04,0.93)
      {one wide CTA \;\textbullet\; no peer reduction};

    \node[card,text width=4.10cm] (ccard) at (12.70,1.57) {};
    \node[font=\fontsize{8.0}{9.0}\selectfont\bfseries,text=ForestGreen!58!black]
      at (12.70,2.15) {Eight-CTA exact cluster};
    \node[ccta,fill=ForestGreen!28,font=\fontsize{6.2}{6.8}\selectfont\bfseries,
      text=ForestGreen!58!black] at (11.44,1.68) {0};
    \foreach \x/\lab in {11.80/1,12.16/2,12.52/3,12.88/4,13.24/5,13.60/6,13.96/7} {
      \node[ccta,font=\fontsize{6.2}{6.8}\selectfont,text=ForestGreen!58!black]
        at (\x,1.68) {\lab};
    }
    \foreach \x in {11.44,11.80,12.16,12.52,12.88,13.24,13.60,13.96} {
      \draw[line width=0.48pt,draw=ForestGreen!48!black] (\x,1.53) -- (\x,1.43);
    }
    \draw[line width=0.64pt,draw=ForestGreen!55!black] (11.34,1.43) -- (14.06,1.43);
    \node[font=\fontsize{7.2}{8.0}\selectfont,text=black!62] at (12.70,0.94)
      {eight CTAs \;\textbullet\; rank 0 refines via DSM};
  \end{tikzpicture}
  \caption{Static execution mapping and CTA organization.  The left panel maps captured row
    capacity $M$ and allocated row width $N$ to the primary mapping; the qualitative regions also
    depend on device capability.  The right cards show one owner per row for persistent sampled execution,
    segment CTAs and a last-arriving finisher $F$ for KV-split, one wide CTA for row-local exact
    execution, and eight clustered CTAs whose rank 0 refines DSM-shared state.}
  \label{fig:impl:portfolio}
\end{figure}

\FloatBarrier

\paragraph{Persistent sampled.}
When the call exposes enough independent rows to occupy the device, one 512-thread CTA owns the
complete state of a row.  Keeping the row within one CTA removes inter-CTA coordination from the
common sampled path.  The first scheduling wave assigns one row directly to each CTA; CTAs that
finish this wave claim remaining rows from a device counter.  Direct first-wave assignment
avoids queue traffic for modest calls, while completion-order claiming balances large ragged
batches.  The sampled histogram, FP32 endpoint, candidate frontier, and exact histograms remain
CTA-owned across all three stages; frontier overflow changes residency without introducing a
peer owner.  The stride-64 and stride-128 views are compile-time policies of this common body.

\paragraph{Sampled KV-split.}
For sampled long-row shapes with limited row-level concurrency, the complete-row scan dominates
while a one-CTA-per-row grid leaves the device underoccupied.  Two, four, or eight CTAs
therefore process disjoint contiguous row segments.  Each CTA reconstructs the same row-phased
regular view and hence the same sampled endpoint proposal; this small replicated view is
amortized by the much longer segment scan.  During the complete-row stage, a CTA writes a
private candidate range, candidate count, and leading exact-digit histogram for its segment.
The aggregate validation scan still covers every valid score exactly once, while its parallel
span falls with the number of participating CTAs.

Each segment publishes its state before incrementing a row-local arrival counter.  The
last-arriving CTA becomes the finisher, reduces the fixed-width histograms, compacts the segment
candidate ranges, and evaluates the aggregate count.  A sufficient aggregate enters frontier
refinement on the finisher; an underfilled aggregate enters the exact coarse route.  The split
count is selected as part of the launch geometry, while the finisher exposes the same logical
row-level selection state to refinement as the one-CTA mapping.

\paragraph{Row-local exact.}
When one wide block provides sufficient per-row parallelism, row-local ownership avoids the
coordination cost of a cooperative mapping.  A 1024-thread CTA constructs the exact coarse
boundary and completes selection with row-local state.  The short-row specialization keeps
several aligned score vectors in registers across boundary construction and classification,
carrying the loaded values into the exact path without materializing a row-sized candidate
workspace.  The longer-row specialization streams scores under the same row-local ownership,
using either direct row assignment or a bounded worker schedule according to the captured row
count.  Both variants retain compact exact boundary state on chip and reuse the common
three-digit FP32 refinement.

\paragraph{Eight-CTA exact cluster.}
When only a few long rows are available, even a wide row-local CTA leaves insufficient blocks to
fill the device.  An eight-CTA hardware cluster converts row length into intra-row parallelism
by assigning one contiguous segment to each CTA.  The CTAs first construct disjoint exact coarse
histograms and combine them through distributed shared memory.  Once the common coarse boundary
is known, each CTA rescans its segment and publishes the definitely selected prefix and
boundary-bin candidates.  Rank 0 owns FP32 frontier refinement and directly traverses the
compact DSM-resident state, while peer CTAs publish disjoint selected ranges and keep their
distributed state live until row output is complete.  The mapping turns coarse localization and
boundary classification into intra-row parallel work, then leaves only the reduced frontier
under one owner for fine-grained refinement.

\paragraph{Shape-stable dispatch.}
\label{sec:impl:integration}
One host planner realizes this execution map before graph capture from the captured rank, row
capacity, padded width, row pitch, and device capability.  It fixes the kernel family, view
density, KV-split count, and workspace contract behind a single entry point; the same plan
drives workspace sizing, while a persistent direct-exact body supplies a shape-compatible
fallback when sampling is not selected or a cooperative launch cannot be formed.  The live row
count and per-row valid lengths remain device inputs, so capture and replay preserve launch
geometry, queue counters, split state, and spill ownership without allocation, host readback, or
data-dependent relaunch; persistent counters are restored on device.

\clearpage
\section{Evaluation}
\label{sec:eval}

We evaluate exact Top-$K$ selection from standalone operator scaling to framework-derived
long-context traces and score-and-select integration.  We first measure how performance changes
with rows per call and allocated row width, then quantify whether the gains persist under native
scheduling and score-matrix workspace partitioning.  Policy and mechanism ablations finally
identify the source of the gain and the row shapes for which sampled localization is effective.

\subsection{Experimental Setup}
\label{sec:eval:setup}

\paragraph{Experimental environment.}
We evaluate HPC-Ops Top-K on NVIDIA H20 GPUs using FP32 indexer scores produced by
Hy4-Preview~\cite{hy4preview2026} and captured immediately before selection.  The evaluated
standalone operator uses $K=2048$ and preserves the native ragged row lengths.
\Cref{tab:eval:env} summarizes the hardware and software environment.

\begin{table}[!htbp]
\centering
\captionsetup{skip=6pt}
\caption{Evaluation environment.  CUTLASS Python DSL is used by the
  TensorRT-LLM CuTe kernels.}
\label{tab:eval:env}
\begin{tabular}{ll}
\toprule
Component & Configuration \\
\midrule
GPU & \texttt{NVIDIA H20} (78 SMs, \SI{96}{GiB}) \\
Selection & Top-$K$ \\
Score tensor & FP32 \\
CUDA runtime & \texttt{12.8} \\
PyTorch & \texttt{2.11.0+cu128} \\
CUTLASS Python DSL & \texttt{4.5.0} \\
\bottomrule
\end{tabular}

\end{table}

\paragraph{Timing protocol.}
We measure GPU device time with CUDA events on the execution stream.  After the
workload-specific warmup and conditioning schedule, repeated CUDA-Graph measurements produce the
per-call median.  The timed region contains operator execution; input preparation, allocation,
reference construction, correctness checks, and one-time compilation are excluded.  The
terminal-step parameter and complete-step mechanism controls use 10 warmups, two conditioning
sweeps of 50 replays, and seven rotated rounds that each average 100 replays.  The crossover
sweep uses 10 warmups and seven rotated 100-replay rounds, while the score-and-select
integration check uses eight warmups, two 25-replay conditioning sweeps, and seven 50-replay
rounds.  Trace- and step-level latency is the sum of the corresponding per-call medians.

\paragraph{Exactness criterion.}
\label{sec:eval:correctness}
For the operator, framework, and terminal-step experiments, we validate every
returned row under the set-valued exactness contract.  Its indices must be
distinct and lie in the valid prefix $[L_r]$.  For $L_r>K$, a set $S_r$ is
exact when $|S_r|=K$ and
\begin{equation}
  \min_{i\in S_r} x_{r,i}
  \;\ge\;
  \max_{j\in [L_r]\setminus S_r} x_{r,j},
  \label{eq:eval:exactness}
\end{equation}
with arbitrary choices permitted among values tied at the boundary.  When
$L_r\le K$, the result must contain every valid index and use the contract
sentinel for the remaining output slots.  The criterion accepts any valid
choice among boundary ties without imposing output order or a floating-point
tolerance.  Latency ratios use exact executions; invalid results, illegal
accesses, and out-of-memory outcomes are reported separately.
The score-and-select integration applies the same criterion to verified rows
from every constituent call.

\paragraph{Implementations.}
We evaluate the production entry of HPC-Ops and the closest available exact
implementations from five external families:
\begin{itemize}[leftmargin=1.5em,itemsep=0.18em,topsep=0.3em]
  \item \textbf{HPC-Ops.} We invoke \texttt{hpc.topk} with its
    production workspace and internal shape dispatch.  We use commit
    \texttt{f39028d}.
  \item \textbf{vLLM.} We evaluate two upstream sampler entries~\cite{vllm2023}.  Prefill uses
    \texttt{top\_k\_per\_row\_}\allowbreak\texttt{prefill}; decode uses
    \texttt{top\_k\_per\_row\_}\allowbreak\texttt{decode} with \texttt{next\_n=1}.  We use commit
    \texttt{c5d840f}.
  \item \textbf{TensorRT-LLM.} We evaluate
    \texttt{cute\_dsl\_topk\_prefill\_}\allowbreak\texttt{wrapper} with both \textsc{reread} and
    \textsc{gmem-spill} overflow policies~\cite{tensorrtllm2026}.  Its decode path invokes
    \texttt{cute\_dsl\_radix\_filter\_topk\_}\allowbreak\texttt{wrapper}; the eight-CTA path invokes
    \texttt{cute\_dsl\_radix\_filter\_topk\_}\allowbreak
    \texttt{single\_pass\_multi\_cta\_wrapper}.  We use commit
    \texttt{f5cbe6b}.
  \item \textbf{SGLang.} We evaluate the JIT Top-K V2 path~\cite{sglang2024}
    \texttt{topk\_transform\_512\_v2}, with metadata produced by
    \texttt{plan\_topk\_v2}.  We use commit \texttt{155aa26}.
  \item \textbf{FlashInfer.} We evaluate \texttt{flashinfer.top\_k}~\cite{flashinfer2025} with
    automatic algorithm selection and \texttt{sorted=False}.  We use commit
    \texttt{56ed540}.
  \item \textbf{PyTorch.} We evaluate \texttt{torch.topk}~\cite{pytorch2019} with
    \texttt{largest=True} and \texttt{sorted=False} as the generic dense
    selector.  We use commit \texttt{70d99e9}.
\end{itemize}
All implementations operate on the same valid FP32 score prefixes.  Native
variable-length interfaces consume the original row lengths directly.  For
dense-only interfaces, a full-suffix-mask adapter sets every invalid suffix
position to negative infinity before timing while preserving the original row
count, allocated width, and invocation boundaries.

\subsection{Operator-Level \texorpdfstring{Top-$K$}{Top-K} Performance}
\label{sec:eval:scaling}

We first isolate the two dimensions that determine the dominant score traffic: the rows per call
and their allocated width.  The matrix combines $M\in\{512,1024,2048,4096,8192\}$ with four
terminal widths of 128K, 256K, 512K, and approximately 1M elements.  All inputs are unchanged
scores from the Hy4-Preview captures.  For each width $N$, we retain its final 8192 consecutive
causal rows, so their valid prefixes span the final 8192 context lengths ending at $N$.
Consecutive non-overlapping groups form the smaller values of $M$, so the sweep changes launch
occupancy and aggregate work without replacing or rescaling the score distribution.

Every implementation receives the same full-width FP32 tensor for a given shape.  Invalid causal
suffixes are set to negative infinity before timing; length-aware implementations additionally
receive the original row lengths. The latency of a shape is the median over all groups and
repeated executions. For each external implementation family, we retain its fastest entry point
that passes \cref{eq:eval:exactness} at that shape, and compare HPC-Ops with the fastest of
these family-level results.

\begin{figure}[!htbp]
  \centering
  \includegraphics[width=\linewidth]{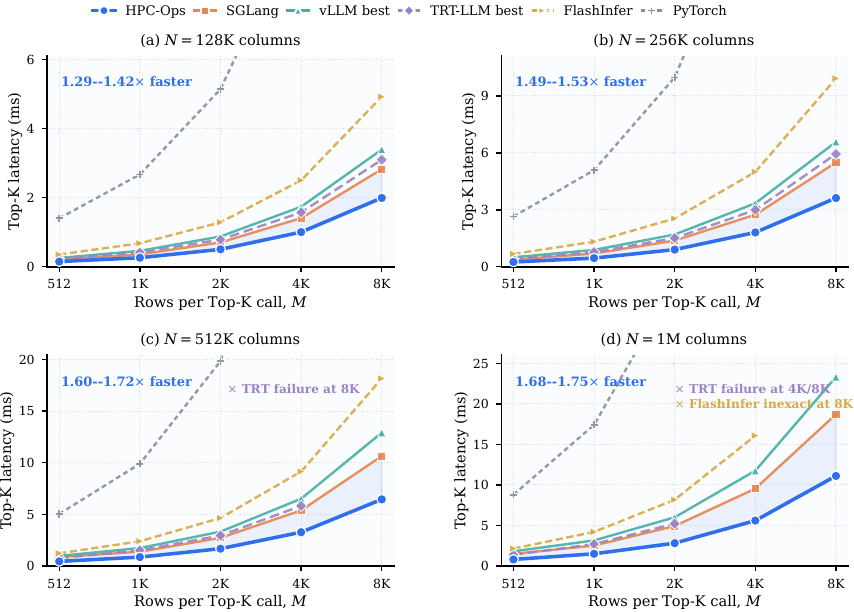}
  \caption{Exact Top-$K$ latency on captured causal terminal windows.  Each panel
    fixes the allocated row width $N$ and sweeps the rows per call $M$.  Every
    implementation receives the same full-width tensor with invalid causal
    suffixes masked to negative infinity.  Baseline curves retain the fastest
    verified exact entry point in each family at every shape; missing points
    denote no verified exact result.  PyTorch is retained as a masked-dense
    reference, with values beyond the shared panel ranges clipped to preserve
    resolution among the competitive kernels.  HPC-Ops is fastest in all 20
    shapes by $1.29$--$1.75\times$ ($1.55\times$ geometric mean).}
  \label{fig:eval:scaling}
\end{figure}

HPC-Ops passes exact verification and is fastest in all 20 configurations. Its advantage over
the best exact external result is $1.29$--$1.42\times$ at 128K columns, $1.49$--$1.53\times$ at
256K, $1.60$--$1.72\times$ at 512K, and $1.68$--$1.75\times$ at 1M, for a geometric mean of
$1.55\times$ over the matrix.  SGLang supplies the best-exact result in 18 of the 20 shapes;
TensorRT-LLM decode supplies the other two at $M=512$ on the longest widths. The gain is
comparatively stable across $M$ at a fixed width, but widens as $N$ grows.  This trend matches
the intended long-row regime: increasing the row count mainly changes parallel occupancy,
whereas increasing the context length amplifies the full-row traffic avoided during boundary
localization.

Latency is not the only scaling limit at million-token widths.  TensorRT-LLM's GMEM-spill policy
handles candidates that exceed on-chip capacity by reserving $2MN$ additional 32-bit entries.
Because this reservation follows the full $M\times N$ score shape rather than the realized
overflow volume, its memory cost grows in lockstep with the input and eventually causes an
out-of-memory failure.  HPC-Ops does not rely on a separate matrix-wide spill allocation.
Avoiding this second score-matrix-scale footprint leaves substantially more device memory
available as the row batch and context length grow together.

\subsection{Framework-Level \texorpdfstring{Top-$K$}{Top-K} Performance}
\label{sec:eval:framework}

Framework scheduling determines the sequence of row groups and prefix widths presented to
Top-$K$ during a long-context request.  In the vLLM-based execution considered here, query work
from prefill and decode is first organized into bounded scheduler steps.  The indexer then
partitions each step again to keep its materialized FP32 score matrix within a fixed workspace
budget, so one scheduler step may issue several Top-$K$ invocations.  As the prefix width $N$
grows, the number of rows that fit in one invocation decreases approximately in inverse
proportion to $N$.  The resulting trace therefore moves toward wider rows and smaller row groups
over the course of a request.

The framework replay caps each scheduler step at 64K query tokens. Under this policy, the 245K
and 430K requests span four and seven scheduler steps but generate 26 and 71 Top-$K$
invocations, respectively. We preserve each invocation's original $M$, allocated $N$, and
per-row valid lengths $L_r$, together with the original invocation boundaries and order. Every
implementation therefore sees the same framework-derived shape sequence.  The timing protocol of
\cref{sec:eval:setup} is applied throughout each trace, and we report its trace-level Top-$K$
latency.

\begin{figure}[!htbp]
  \centering
  \includegraphics[width=\linewidth]{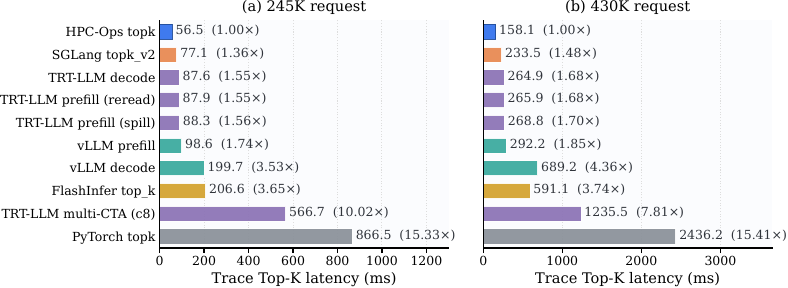}
  \caption{Exact Top-$K$ latency over the complete framework traces of the
    245K and 430K requests.  Parenthesized values report latency relative to
    HPC-Ops.  Every implementation receives the same framework-derived
    sequence of row shapes and valid lengths.}
  \label{fig:eval:framework}
\end{figure}

HPC-Ops remains exact and fastest for both requests.  Compared with SGLang, the fastest external
exact baseline, HPC-Ops reduces trace-level Top-$K$ latency from 77.15 to 56.53 ms at 245K and
from 233.50 to 158.12 ms at 430K, yielding speedups of $1.36\times$ and $1.48\times$,
respectively.  The vLLM and TensorRT-LLM paths also remain exact but require more time over both
traces, while the exact masked dense paths are slower still (\cref{fig:eval:framework}).

The increase from $1.36\times$ at 245K to $1.48\times$ at 430K follows the progression toward
wider rows described above.  As a request grows, later chunks account for a larger share of
trace-level Top-$K$ latency and place more of the execution in the wide-row regime, where
sampled boundary localization avoids more full-row traffic.  Their growing weight strengthens
the trace-level benefit, reaching $1.55\times$ at the terminal 430K prefix.

\paragraph{Score-and-select integration.}
We additionally place the selector behind the same DeepGEMM~\cite{deepgemm2025} FP8 MQA score
producer and capture both operations in one CUDA graph.  Replacing SGLang V2 with HPC-Ops
reduces the Top-$K$ segment from 27.41 to 18.15 ms at 245K and from 38.01 to 24.06 ms at 430K,
corresponding to $1.51\times$ and $1.58\times$.  The complete score-and-select chains fall from
339.99 to 330.84 ms and from 488.01 to 474.70 ms, saving 9.15 and 13.31 ms.  These measurements
cover the terminal score-production and selection chain; model layers outside that chain are not
included.

\subsection{Sampled-Path Ablations}
\label{sec:eval:ablation}

Having established the gains at both the operator and framework levels, we next examine how the
sampled path is configured, protected against underfill, and dispatched across row shapes.  We
first vary the retention margin $\rho$ and view stride $s$ to select a compact primary proposal,
then evaluate how the secondary target absorbs the remaining underfills.  A matched
exact-proposal control and an $(M,N)$ sweep finally identify where the gain comes from and over
which row shapes sampling is worthwhile.

The margin, stride, and rescue experiments replay the final scheduler steps of the 245K and 430K
traces at their native invocation boundaries.  They preserve every call's original $M$, $N$,
$L_r$, and execution order, and report step latency as the sum of per-call medians.  Recovery
rates are aggregated over all rows in the corresponding step, and every reported variant
satisfies \cref{eq:eval:exactness}.

\paragraph{Retention margin.}
We first fix the view stride at $s=128$ and sweep the requested complete-row rank $\widetilde
K=\rho K$ from $K$ to $2K$.  Moving the proposal deeper into the upper tail gives sampling error
more room before underfill, but also retains a larger candidate set for exact refinement.

\begin{figure}[!htbp]
  \centering
  \includegraphics[width=\linewidth]{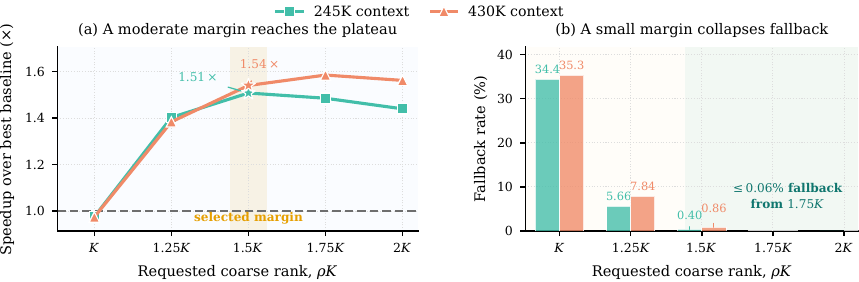}
  \caption{Retention-margin ablation on the terminal framework chunks.
    With $s=128$ and the secondary boundary disabled, panel (a) reports exact
    speedup over the fastest external baseline and panel (b) reports the rows
    entering exact recovery.  A target of $\widetilde K=1.5K$ reaches the
    shared latency knee while keeping primary recovery below 1\%.}
  \label{fig:eval:retention-margin}
\end{figure}

At $\rho=1$, the proposal has no allowance for estimation error, sending roughly one third of
the rows to exact recovery and erasing the sampled path's advantage.  Increasing the target to
$1.25K$ lowers recovery to 5.7--7.8\%, while $1.5K$ lowers it below 1\% and reaches
$1.51$--$1.54\times$ speedup (\cref{fig:eval:retention-margin}).  Deeper targets suppress the
remaining recoveries but retain more candidates on every row, producing a broad latency plateau
rather than a sharper optimum.  The $1.5K$ target therefore supplies a compact tail with
sufficient calibration margin on both traces.

\paragraph{Sampling stride.}
We next hold the requested complete-row rank at $\widetilde K=1.5K$ and vary the view stride $s$
from 16 to 256 under the same primary-only policy. Increasing $s$ reduces the sampled proposal
work in proportion to the view density, but a view that becomes too sparse incurs more exact
recovery.

\begin{figure}[!htbp]
  \centering
  \includegraphics[width=\linewidth]{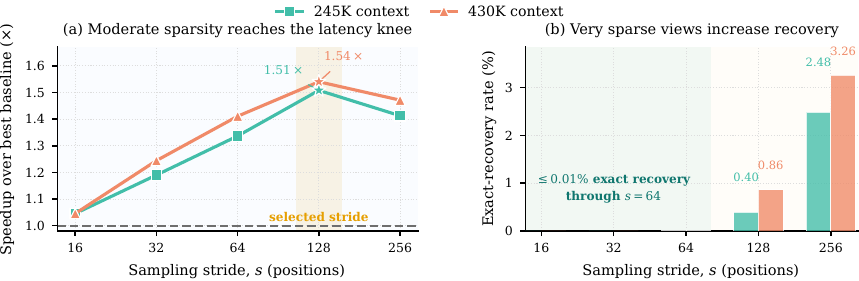}
  \caption{Sampling-stride ablation at $\widetilde K=1.5K$ on the same terminal
    framework chunks.  Panel (a) reports exact speedup over the fastest external
    baseline, and panel (b) reports primary exact recovery.  Moderate sparsity
    reduces proposal work through $s=128$; the higher recovery rate at $s=256$
    reverses part of the gain.}
  \label{fig:eval:sampling-stride}
\end{figure}
\FloatBarrier

The denser $s=16$ view is almost recovery-free but yields only about $1.05\times$ speedup
because proposal construction still reads a larger fraction of each row.  Recovery remains below
0.01\% through $s=64$, and $s=128$ raises the speedup to $1.51$--$1.54\times$ while keeping
recovery below 0.9\%.  At $s=256$, recovery rises to 2.48--3.26\% and the speedup falls on both
traces (\cref{fig:eval:sampling-stride}).  The common knee at $s=128$ balances the shrinking
proposal view against the growing cost of exact recovery.

\paragraph{Conditional rescue.}
Figures~\ref{fig:eval:retention-margin} and \ref{fig:eval:sampling-stride} disable the secondary
boundary to expose the primary proposal in isolation.  The default policy instead records a
secondary complete-row target $\widetilde K_1=1.75K$ alongside the primary target $\widetilde
K=1.5K$.  The same sampled histogram produces the row-local boundaries $\tau_{0,r}$ and
$\tau_{1,r}$ for these targets.  Only a row whose primary count satisfies $C(\tau_{0,r})<K$
scans the intervening band up to $\tau_{1,r}$; all other rows enter refinement with the original
compact frontier.  We compare this conditional policy with a primary-only variant that sends the
same underfilled rows directly to exact coarse recovery.

\begin{figure}[!htbp]
  \centering
  \includegraphics[width=\linewidth]{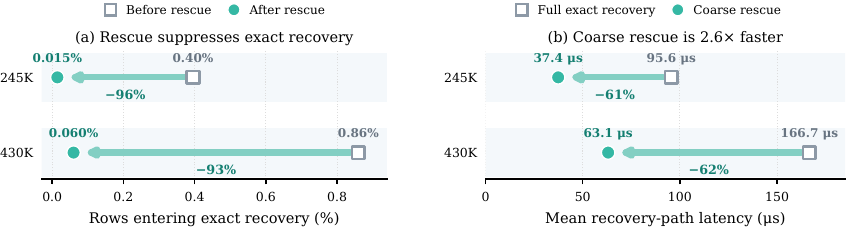}
  \caption{Conditional-rescue ablation at $s=128$.  Panel (a) compares the
    fraction of rows entering
    exact recovery before and after applying the secondary boundary.  Panel
    (b) compares the mean recovery-path latency of coarse rescue with full
    exact recovery on affected rows.  Conditional rescue resolves 93--96\% of
    primary misses, while coarse rescue requires 61--62\% less recovery-path
    time than full exact recovery.}
  \label{fig:eval:conditional-rescue}
\end{figure}

The primary boundary underfills 0.40\% and 0.86\% of rows in the two terminal steps.
Conditional rescue reduces the residual exact-recovery rates to 0.015\% and 0.060\%, resolving
96\% and 93\% of those misses, respectively (\cref{fig:eval:conditional-rescue}).  On the
affected rows, coarse rescue reduces mean recovery-path latency from 95.6 to 37.4~$\mu$s at 245K
and from 166.7 to 63.1~$\mu$s at 430K.  The secondary boundary therefore protects the compact
primary frontier against rare underestimates without adding a common-path scan.

\paragraph{Mechanism attribution and activation regime.}
We compare exact and sampled boundary localization at the same complete-row target $\widetilde
K=1.5K$ while holding the downstream exact selector fixed.  The exact control derives its
boundary from the complete row, whereas the sampled path uses an $s=128$ view.  We then sweep
the allocated row width $N$ from 16K to 64K and evaluate seven row counts $M$ from 64 to 4096
over two captured score distributions, yielding 14 workload cells at each width. Together, these
controls identify both the source of the common-path gain and the row shapes over which sampling
amortizes its fixed overhead.

\begin{figure}[!htbp]
  \centering
  \includegraphics[width=\linewidth]{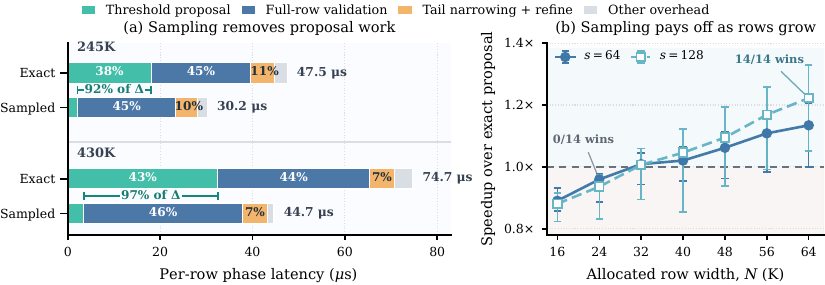}
  \caption{Mechanism attribution and sampling crossover.  Panel (a)
    decomposes additive $M=1$ active-CTA phase latency for matched exact- and
    sampled-proposal controls; brackets report the fraction of their difference
    explained by boundary localization.  Panel (b) reports the median speedup
    over exact proposal across 14 workload cells---seven $M$ values over two
    captures---at each width, with error bars spanning the observed minimum and
    maximum.  The $s=128$ series uses the default dual-boundary policy.}
  \label{fig:eval:mechanism}
\end{figure}

In the additive per-row measurements, sparse localization reduces proposal time from
approximately 18 to 2~$\mu$s at 245K and from 33 to 3.5~$\mu$s at 430K.  The reduction accounts
for 92\% and 97\% of the latency difference, while full-row validation and exact tail processing
remain comparable across the matched controls (\cref{fig:eval:mechanism}).  Over the complete
terminal steps, replacing the exact proposal with the sampled proposal lowers latency from 32.8
to 18.3~ms and from 43.2 to 24.2~ms, corresponding to $1.79\times$ and $1.78\times$.  Together,
these results identify boundary localization as the dominant source of the complete-step gain.

The width sweep exposes a transition rather than a single universal cutoff. At 16K and 24K,
exact proposal wins all tested cells.  Results become mixed from 32K through 56K as row count
and capture geometry determine whether the saved proposal work amortizes sampling.  At 64K, both
sampled variants win all 14 cells; their median speedups reach $1.13\times$ for $s=64$ and
$1.22\times$ for the default $s=128$ policy.  This regime supports dispatching shorter rows to
exact localization while reserving sampling for sufficiently wide rows, with the transition
handled as a range rather than a hard threshold.

\FloatBarrier

\section{Related Work}
\label{sec:related}

\Cref{sec:bg:landscape} compares algorithms that accept the same materialized score rows and
return exact Top-$K$ indices.  This section considers systems that reduce the surrounding
sparse-attention cost by moving selection into score production, sharing indexing work across
queries or layers, or relocating the data and execution substrate.  These approaches change the
standalone boundary defined in \cref{sec:bg:interface}, although several still require token
selection within their resulting pipelines.

\paragraph{Fused and streaming indexer selection.}
Fusing selection into score production can eliminate the complete materialized score matrix.
LiteTopK integrates exact Top-$K$ into the indexer and filters candidates as their scores are
produced~\cite{litetopk2026}; StreamIndex evaluates the indexer in chunks and merges bounded
candidate sets to control the intermediate memory of compressed sparse
attention~\cite{streamindex2026}. Both consume query and key representations rather than an
already materialized $\mathbf X\in\mathbb R^{M\times N}$.  Their optimization boundary is
therefore the fused indexer pipeline rather than the standalone materialized-row operator.

\paragraph{Reusing indexing work.}
Another line reduces how often the indexer and selector run.  PIVOT scans the prefix through a
proxy for nearby queries, then reuses its result or refines an individual Top-$K$ from its
candidates~\cite{pivot2026}.  IndexCache shares selected indices across layers that omit their
own indexers~\cite{indexcache2026}; LongCat combines cross-layer reuse with hierarchical
indexing and streaming-aware layouts~\cite{longcat2026}.  These methods exploit redundancy
outside one score row, while their owner layers, full-indexer layers, or per-query refinement
stages still require candidate selection.  Reducing per-selection cost is therefore
complementary to reducing the number of selections.

\paragraph{Alternative memory and execution substrates.}
Sparse selection also creates opportunities after indices have been produced.  HiSparse keeps
the complete KV history in host memory and resolves selected entries through a bounded GPU cache
while preserving the indexer's choices~\cite{hisparse2026}.  KARAT instead proposes a
programmable near-memory design for storing index keys and executing scoring, selection, and
gathering, evaluated in simulation~\cite{karat2026}.  HiSparse composes with a GPU selector,
whereas KARAT replaces the GPU-resident boundary.  Both address capacity and data placement
rather than materialized-row selection cost.

\section{Conclusion}
\label{sec:conclusion}

Long-context sparse attention creates an exact Top-$K$ regime in which $K$ remains fixed while
ragged score rows grow to hundreds of thousands of elements.  HPC-Ops exploits the separation
between locating a compact upper tail and resolving its exact rank boundary.  A regular partial
view proposes a row-local coarse boundary, a complete-row pass certifies that proposal and
materializes its admitted candidates, and FP32 refinement resolves only the remaining frontier.
Underfilled proposals are detected and recovered before output is committed, preserving exact
Top-$K$ semantics on every path.

HPC-Ops maps this design to persistent and KV-split sampled kernels, complemented by row-local
and cooperative direct-exact kernels for other shapes.  On indexer-score captures from
Hy4-Preview, this portfolio is $1.29$--$1.75\times$ faster than the best verified external exact
implementation across the operator matrix, with a $1.55\times$ geometric-mean speedup.  The gain
persists over framework-derived long-context traces, reaching $1.36\times$ and $1.48\times$.
Matched controls show that the improvement comes primarily from replacing complete-row boundary
localization with a sparse current-row proposal.

Taken together, these results show that sampling is most useful here as a proposal mechanism
rather than an approximation to Top-$K$.  The sampled boundary controls the amount of work
presented to exact refinement, while complete-row certification determines validity and
activates recovery when necessary.  This separation preserves exact Top-$K$ semantics while
trading common-path traffic against rare recovery work.

\end{document}